\documentclass[11pt]{article} 
\usepackage{subcaption}
\usepackage{definitions}
\usepackage{fullpage}
\usepackage{xr-hyper}
\usepackage{amsmath}
\usepackage{diffcoeff,amssymb}
\usepackage{comment}
\usepackage{amssymb}
\usepackage{diagbox}
\usepackage[]{graphicx}
\usepackage{subcaption}
\usepackage{threeparttable}
\usepackage{pdflscape}
\usepackage{geometry}
\usepackage{booktabs}
\usepackage{dsfont}
\usepackage{url}
\usepackage{algpseudocode}
\usepackage[shortlabels]{enumitem}
\usepackage{setspace}
\usepackage{pgfplots}
\usepackage{float}
\usepackage{xurl}
\usepackage[hidelinks]{hyperref}
\usepackage{cleveref}

\usetikzlibrary{arrows.meta, bending}

\DeclareMathOperator*{\argmax}{arg\,max}

\algnewcommand{\algorithmicand}{\textbf{and }}
\algnewcommand{\algorithmicor}{\textbf{or}}
\algnewcommand{\OR}{\algorithmicor}
\algnewcommand{\AND}{\algorithmicand}

\theoremstyle{remark}
\newtheorem{remark}{Remark}[section]

\newcommand{\bfdelta}{\boldsymbol{\delta}}
\newcommand{\bfv}{\boldsymbol{v}}

\newcommand{\bfr}{\boldsymbol{r}}
\newcommand{\bfX}{\boldsymbol{X}}

\crefname{ass}{assumption}{assumptions}
\crefname{prop}{proposition}{Proposition}
\crefname{lem}{lemma}{lemmas}
\crefname{thm}{theorem}{theorems}

\onehalfspacing 
\usepackage[margin=1in]{geometry} 

\begin{document}

\title{The Power of Linear Programming in Sponsored Listings Ranking: Evidence from a Large-Scale Field Experiment
}

\author{
{Haihao Lu \thanks{MIT, Sloan School of Management (\texttt{haihao@mit.edu})}}
\and
{Luyang Zhang \thanks{Carnegie Mellon University, Heinz College (\texttt{luyangz@andrew.cmu.edu})}}
\and
{Yuting Zhu \thanks{National University of Singapore, NUS Business School (\texttt{y.zhu@nus.edu.sg})}}
}

\footnotetext[1]{The authors are listed alphabetically and contributed equally to this work. For confidentiality reasons, the company’s identity remains anonymous throughout the paper. The authors thank the company for providing the empirical setting and data. The views expressed in this paper are solely those of the authors and do not necessarily reflect the views of any affiliated organization. All remaining errors are the authors’ own.}
\date{\today}
\maketitle

\begin{abstract}
Sponsored product advertisements constitute a major revenue source for online marketplaces such as Amazon, Walmart, and Alibaba. A key operational challenge in these systems lies in the Sponsored Listings Ranking (SLR) problem, that is, determining which items to include and how to rank them to balance short-term revenue with long-term relevance and user experience. Industry practice predominantly relies on score-based algorithms, which construct heuristic composite scores to rank items efficiently within strict real-time latency constraints. However, such methods offer limited control over objective trade-offs and cannot readily accommodate additional operational constraints. We propose and evaluate a Linear Programming (LP)-based algorithm as a principled alternative to score-based approaches. We first formulate the SLR problem as a constrained mixed integer programming (MIP) model and develop a dual-based algorithm that approximately solves its LP relaxation within 0.1 second, satisfying production-level latency requirements. In collaboration with a leading online marketplace, we conduct a 19-day field experiment encompassing approximately 329 million impressions. The LP-based algorithm significantly outperforms the industry-standard benchmark in key marketplace metrics, demonstrating both higher revenue and maintained relevance. Mechanism analyses reveal that the performance gains are most pronounced when the revenue–relevance tradeoff is stronger. Our framework also generalizes to settings with inventory, sales, or fairness constraints, offering a flexible and deployable optimization paradigm. The LP-based algorithm was deployed in production at our partner marketplace in January 2023, marking a rare large-scale implementation of a mathematically grounded ranking algorithm in real-world online advertising.

\singlespacing
\noindent\emph{\textbf{Keywords}}: online advertising, sponsored listings rankings, linear programming, latency, revenue-relevance tradeoff, field experiment

\end{abstract}

\newpage
\section{Introduction}\label{sec:intro}
Sponsored product advertisements constitute a primary revenue stream for most major online marketplaces, including Amazon, Walmart, and Alibaba. For instance, Amazon and Google reported \$37.7 billion and \$224.47 billion in sponsored advertising revenues in 2022, respectively.\footnote{\url{https://ir.aboutamazon.com/annual-reports-proxies-and-shareholder-letters/default.aspx} ; \url{https://abc.xyz/investor/ue}}
 In such advertising systems, third-party sellers pay additional commission fees to online marketplaces to increase the visibility of their products, thereby enhancing the likelihood that their items are prominently displayed to consumers.

A particularly prevalent format of sponsored product advertising is the sponsored product listing, which is typically displayed under a \textit{seed product}. In this format, promoted items are interleaved with organically ranked products and may appear in sections such as “Sponsored Related Items,” “Products Related to This Item,” or “You May Also Like,” with platform-specific variations in terminology. Figure~\ref{fig:slrexample} provides an illustrative example from Amazon. Within these listings, both sponsored and non-sponsored products coexist without predetermined slots reserved for sponsorships. This intermixing introduces substantial complexity to determining which items to include in the listings and how to optimally rank them, a problem we refer to as the \textit{Sponsored Listings Ranking} (SLR) problem, which is the primary focus of this study.\label{s3:SLRintro}

\begin{figure}[ht]
\begin{center}
\caption{\centering Sponsored Product Listings Example}
\includegraphics[width=0.6\textwidth]{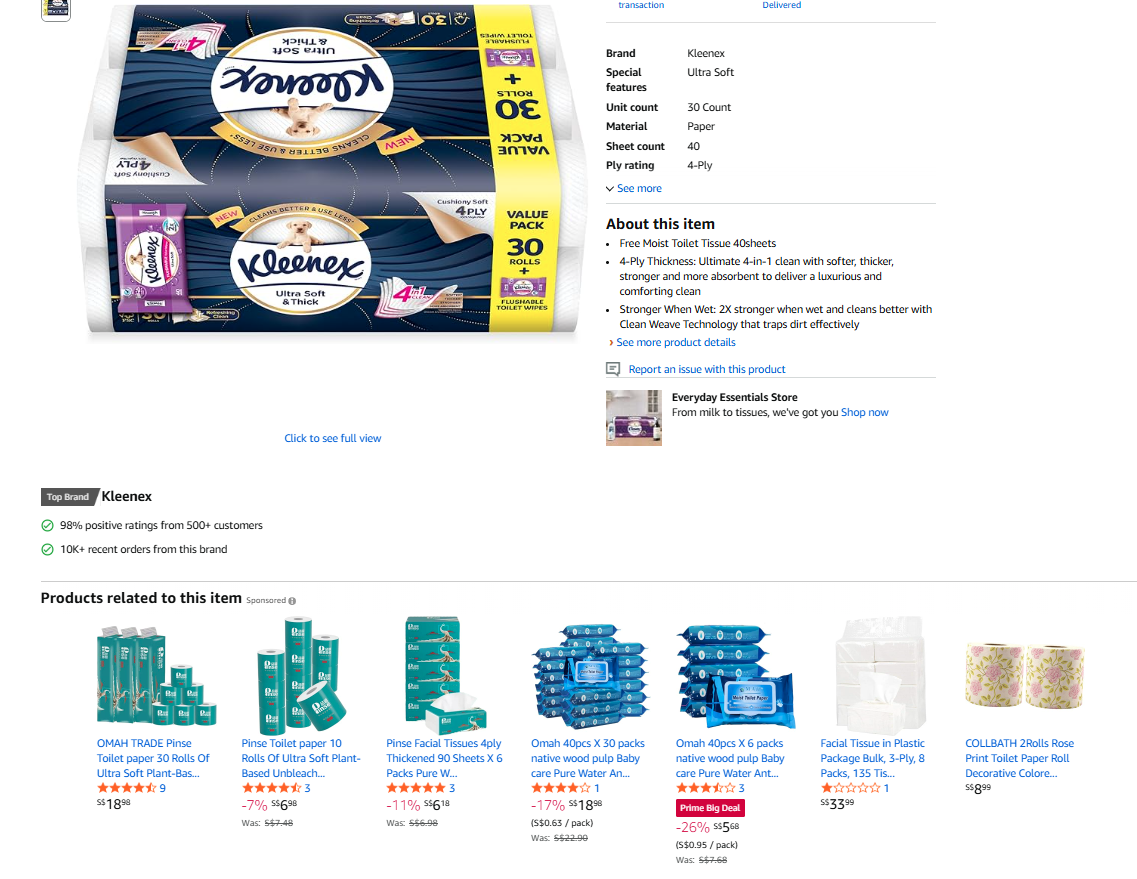}
\label{fig:slrexample}
\end{center}
\end{figure}

The primary challenge in SLR design lies in managing the trade-off between revenue generation and relevance. While displaying more sponsored products can increase short-term platform revenue, such placements may not align with consumer preferences and could degrade user experience. Hence, the platform must balance two competing objectives: maximizing revenue while maintaining the relevance and quality of displayed items to sustain long-term consumer engagement. Achieving this balance requires a principled approach to ranking\label{s3:revenuerelevancetradeoff}.

Beyond this fundamental trade-off, the practical design of SLR algorithms involves several additional considerations. First, in terms of computational efficiency, the algorithm must satisfy stringent latency requirements, typically producing a ranked list within 0.1 seconds to meet real-time display constraints. Second, with respect to interpretability and tractability, the ranking framework should embody clear objectives and explicit constraints, allowing the decision-making process to remain transparent and auditable. Third, the algorithm must be flexible and adaptable, enabling adjustments in objectives, constraints, or weighting schemes to accommodate platform-specific business goals and varying marketplace contexts.

To address practical considerations, particularly the stringent latency requirement, most industry solutions to the SLR problem adopt a “predict-then-optimize” paradigm comprising two stages. In the prediction stage, platforms estimate user action rates, such as click-through rate (CTR) or purchase-through rate (PTR), which correspond to the prevailing Cost-Per-Click (CPC) and Cost-Per-Acquisition (CPA) business models. In the optimization stage, the system determines which sponsored items to display and how to rank them. Our proposed algorithmic solution intervenes at this optimization stage while keeping the prediction stage identical to that used in industry\label{s3:predictthenoptimize}.

A widely adopted approach for the optimization stage in practice is the score-based ranking algorithm, which assigns each item a composite score based on predefined criteria and ranks items according to these scores. These algorithms are favored for their computational efficiency, typically producing rankings within 0.1 seconds to satisfy real-time latency requirements. The scoring criteria generally integrate multiple objectives, such as predicted consumer action rate, item relevance, and potential revenue generation. In most implementations, platforms employ a linear combination of these objectives to construct the score function that best aligns with their business priorities \citep{liu2015convolutional, cheng2016wide, covington2016deep, okura2017embedding, zhou2018deep}.

However, score-based algorithms face several inherent limitations. First, the final ranking quality depends critically on how the score function is specified, making it difficult for marketplaces to select an appropriate formulation. Rankings may vary substantially across different scoring functions, and when the chosen function does not directly represent the underlying economic objectives, the results can be suboptimal and fail to achieve Pareto efficiency. Moreover, aggregating multiple objectives into a single composite score provides no explicit control over the performance of individual objectives. In particular, this approach cannot readily accommodate operational constraints that marketplaces may wish to impose, such as inventory limits, sales targets, or fairness considerations.

To address these challenges, we propose and examine an alternative approach, referred to as the Linear Programming (LP)-based algorithm, in this paper. Specifically, we formulate the ranking problem as a constrained mixed integer programming (MIP) model. Because solving the MIP exactly is NP-hard, we relax it to a tractable LP formulation. However, even with state-of-the-art solvers, directly solving the LP remains too computationally intensive to satisfy the stringent latency requirements of online marketplaces. Motivated by these constraints, we develop a dual-based algorithm that approximately solves the original MIP formulation of the SLR problem within 0.1 second, thereby meeting the platform’s real-time demands. We then validate the proposed approach through a large-scale field experiment conducted in collaboration with a leading online marketplace (referred to as Marketplace A for confidentiality). The 19-day experiment, run in April 2022 and encompassing approximately 329 million impressions, demonstrates that the LP-based algorithm substantially outperforms the highly tuned industry-standard score-based algorithm, leading to significant improvements in key marketplace performance metrics.

We further examine the underlying mechanism through which our LP-based algorithm outperforms the score-based benchmark. The key difference lies in how the revenue–relevance tradeoff is modeled. In the score-based algorithm, the industry convention is to frame this tradeoff as one between \textit{organic revenue} and \textit{ad revenue} (see Section \ref{sec:eBay display} for details on the revenue structure). In contrast, our LP-based algorithm adopts a more disciplined formulation of the revenue–relevance tradeoff: rather than separating revenue by source, it maximizes total marketplace revenue subject to a minimum relevance constraint. Empirically, we provide evidence supporting this mechanism. The performance gains of the LP-based algorithm are more pronounced precisely when the revenue–relevance tradeoff is stronger. Furthermore, the algorithm demonstrates advantages in both the \textit{item selection} and \textit{conditional ranking} stages, consistently identifying items with higher expected revenue.

Our key contribution lies in developing a new ranking algorithm, the LP-based algorithm, for the SLR problem, which can simultaneously achieve higher revenue and satisfy the stringent latency requirements of production environments. We further provide empirical evidence from a large-scale field experiment demonstrating the superior performance of the proposed approach. To the best of our knowledge, this paper represents the first empirical study to both document large-scale experiment evidence and uncover the mechanisms underlying the LP-based algorithm’s advantage over score-based algorithms. In addition, our LP-based framework has strong modeling flexibility, enabling it to incorporate additional operational constraints, such as inventory limits, sales targets, or fairness considerations, within an online optimization setting.\label{s3:contribution}

Our proposed algorithm was deployed in production at Marketplace A in January 2023, where it helps the platform deliver high-value sponsored items to consumers in real time. Notably, score-based algorithms have long served as the industry standard for sponsored listing and other ranking problems. Our work, by contrast, introduces a principled and practically viable alternative that has been successfully implemented in production, demonstrating both theoretical soundness and real-world effectiveness.


\subsection{Literature reivew}\label{sec:lit review}


This section presents related literature on sponsored listing ranking problems.

\paragraph{Ranking in recommendation system}

Our work provides an approach for the ranking step in the recommendation system. In a recommendation system, the first step is to estimate the conversion rate of displaying the impression by machine learning algorithms, which has been quite mature and provide reasonable estimations in the online advertising industry \citep{liu2015convolutional,cheng2016wide,covington2016deep,okura2017embedding,zhou2018deep}. 
After estimating the conversion rates, online platforms are tasked with developing a ranking model that displays candidate items to consumers in a designated order, which is the focus of this paper. A majority of previous studies focus on the score-based ranking model \citep{zehlike2022fairness,renda2003web,montague2001relevance,manmatha2001modeling}, that is, calculating a score for each item through a customized scoring function, and then selecting the displayed items by ranking the scores. While the ranking algorithms are similar among various studies, the scoring function, however, might vary, depending on the objectives of each online platform. For instance, \cite{covington2016deep} deployed a deep neural network model and used logistic regressions to generate scores for each video content on YouTube. Google Play \citep{cheng2016wide}, instead, used only the estimated conversion rate as the score for each item, therefore, aiming to recommend the most relevant items to consumers. Netflix \citep{amatriain2013big} adopted a score function that linearly combines popularity and predicted rating for each video. \cite{yang2019bid} proposes a linear programming based algorithm for online advertisement bid, without considering the ranking. As we will introduce more details in \cref{sec:eBay display}, Marketplace A also utilizes a score-based ranking method whose objective combines different sources of revenue \citep{ge2020relevance}. In the case of ``cold start'', i.e., the launch of a new advertisement/item, where the data for conversation rate estimation is limited, \cite{ye2023cold} proposed a bandit algorithm that significantly increased the success rate of cold start in field experiments. 

An inevitable drawback of the score-based method is that even though online platforms aim to achieve Pareto efficiency over multi-objective \citep{lin2019pareto}, it is hard to guarantee the performance of individual objectives. An alternative approach is to formulate the problem as a mathematical programming problem, that is, maximizing an objective function while setting constraints on some other metrics and resources. Such a problem can be formulated as a mixed integer programming (MIP) problem \citep{celis2017ranking}, and sometimes referred to as an advertisement allocation problem \citep{mehta2013online}. We can also associate this problem with some well-studied literature on assignment problem \citep{ross1975branch} and knapsack problem \citep{salkin1975knapsack}. However, as the number of constraints increases, the MIP also becomes increasingly harder to solve, requiring various approximation techniques. A more comprehensive review can be referred to \citep{cattrysse1992survey,pentico2007assignment,karp1990optimal}. Particularly, \cite{celis2017ranking} considered the ranking problem with fairness constraints. Based on LP relaxation, they recovered integral solutions with small constraint violations. \cite{asadpour2023sequential} studied a ranking model that aims to maximize consumer engagement for online retail, proposing an algorithm by LP relaxation and randomized rounding. Further, through LP-relaxation and primal-dual pairs, Google \citep{bhalgat2012online} addressed an advertisement allocation problem with nested packing constraints, and then approximated it with an LP-based primal-dual algorithm; Microsoft \citep{chen2011real} proposed a bidding algorithm for advertisement allocation with budget constraints. Inspired by \citep{chen2011real}, Alibaba \citep{zhang2018whole}) developed a similar algorithm for a global advertisement allocation problem that optimized over accumulated search queries. 

It is worth noting that although standard algorithms for advertisement allocation problems are well established in the existing literature, newly formulated problems with distinct objectives or constraints often require tailored adaptations of these paradigms. Our proposed algorithm falls into this category. Specifically, we develop an efficient approximate algorithm to solve the underlying mixed integer programming (MIP) formulation, with direct application to the SLR problem. Furthermore, we present the first field study that empirically compares the performance of score-based and LP-based algorithms.

In the context of recommender systems, the primary design objective is to ensure low latency, that is, the ability to generate rankings within a strictly limited response time. Consequently, unlike assortment optimization problems (which we will discuss shortly), ranking problems typically do not treat item interactions as a central concern. Most studies therefore assume that position weights (often referred to as “position bias”) are independent of the item assortment \citep{chen2023bias, agarwal2019general, hofmann2013reusing} and develop methods such as propensity-score-based estimation to correct for this bias \citep{chen2023bias, hofmann2013reusing, joachims2017unbiased}.\label{s3:assortmentdifference}

\paragraph{Multi-objectives selections and trade-off}
Our paper also relates closely to the literature that studies the selection of objectives and trade-offs between muti-objectives. As mentioned earlier, given the variety of objectives, different online platforms may prioritize them differently. Other than maximizing revenue, online marketplaces might also need to consider extra objectives such as consumer utility, ranking relevance \citep{ghose2014examining}, algorithm fairness \citep{geyik2019fairness}, and content diversity \citep{chen2024effects}. For instance, online video platforms such as YouTube or Netflix emphasized more on ranking relevance \citep{covington2016deep,amatriain2013big}, while online marketplaces such as Tabao cared more about revenue \citep{hu2018reinforcement}. Naturally, each online marketplace has multiple objectives incorporated into the optimization framework. \cite{agarwal2011click} jointly optimized for initial clicks and post-click downstream. \cite{xiao2017fairness} considered the fairness conflict between individuals and the entire group. \cite{chen2024effects} recommended that social media platforms should focus on increasing content diversification specifically for active consumers. Contrary to industry assumptions, their research revealed that diversified recommendations would enhance content consumption diversity only among active consumers. \cite{adomavicius2014optimization} optimized for the aggregated diversity of a group of recommendations. \cite{sumida2021revenue} characterized the trade-off between revenue and utility under MNL, by constructing the objective function as a weighted sum of revenue and utility.  \cite{louca2019joint} proposed a similar MNL model optimization framework, weighting the sum of revenue and relevance. \cite{li2021online} considered a fairness metric evaluated by the click-throughs of different Ads and customer types. They proposed a two-stage allocation method for online advertisement, which balanced revenue and algorithm fairness. However, since the objectives might not always align, \cite{rodriguez2012multiple} addressed this issue using constraint optimization. On the other hand, as multi-objective optimization often resulted in many nondominated solutions and low pressure towards the Pareto front, \cite{li2016stochastic} developed a stochastic ranking algorithm to tackle this problem. 

\paragraph{Assortment problem}
The work is also related to the literature studying assortment optimization problems, which refer to the challenge of selecting a subset of products to offer from a larger set, in an online setting, where the goal is often to maximize revenue or customer satisfaction \citep{kok2009assortment,caro2007dynamic,gallego2020approximation}.
Differing from ranking problems, assortment problems focus on studying the interaction among listed items and do not usually study the effect of orders. In assortment literature, many previous works adopt parametric models to model the purchase probability of items given a certain set of items, such as the \textit{independent demand model} (IDM) and \textit{basic attrition model} (BAM) \citep{gallego2019revenue}. Particularly, IDM assumes that the probability of sale for each item is independent of others, thus the optimal assortment would be the one including all items with positive revenue. However, IDM might be pessimistic in a way that when items are running out of stock, recapturing any of those sales to alternate products would be infeasible since the demands are treated as independent \citep{gallego2015general}. Under BAM, on the other hand, for a given assortment, the choice probability of each item is normalized by the summation of the utility of all items in that assortment. The often-used \textit{multinomial choice model} (MNL) is a special case of BAM \citep{mcfadden1973conditional,luce2005individual}. Under the MNL model with no constraint, \cite{talluri2004revenue,gallego2004managing} characterized the revenue-ordered property of the optimal assortment. Further, for the robust assortment optimization under MNL, the optimal assortment is still revenue-ordered \citep{rusmevichientong2012robust}. However, in contrast to IDM, BAM might be optimistic about recapturing the sale of alternative items due to the self-normalization procedure \citep{gallego2015general}. \cite{feldman2022customer} also emperically showed that the MNL model performed worse during multi-purchase events. Recently, there has been some research studying the ranking of the assortment of products for online retailing. \cite{ferreira2022learning} studied the ranking of the selected assortment display to consumers. The proposed online learning algorithms also for maximum consumer engagement, which balanced between learning speed and optimality gap. As classic choice models for online retailing only gave limited consideration to substitution patterns, \cite{jiang2020high} proposed a high-dimensional choice as an alternative. \cite{asadpour2023sequential} focused on ranking assortment in the form of a submodular maximization problem. They developed offline approximation algorithms that give the maximum probability of purchases. Most of the previous works in the assortment literature focused on developing theory guarantees. The only empirical study we found on online assortment problems is \cite{feldman2022customer}. They novelly implemented the algorithm proposed in \cite{rusmevichientong2010dynamic} and presented an empirical study at Alibaba comparing the multinomial logit model versus industry-level machine learning models for assortment. 

\vspace{0.2cm}
\section{Institutional Context and Score-Based Ranking}\label{sec:eBay display}
We begin this section by outlining the relevant institutional details of our partner marketplace, including how sponsored listings rankings and the advertising system operate in Marketplace A. We then describe the current score-based ranking algorithm employed by Marketplace A.

\subsection{Sponsored Listings Rankings (SLR)}\label{subsec:market_overview}
Sponsored product advertisements are a major source of revenue for most online marketplaces, including Amazon, Alibaba, and our partner marketplace. These sponsored products typically appear in two key locations where they are relevant as a \textit{ranking problem}, which we define as generating a carefully ordered list that includes both sponsored and non-sponsored products.

The first location is the keyword search results page, where the ranking process is referred as \textit{Keyword Ranking} (KR) in our partner marketplace. KR addresses the challenge of displaying and ranking items based on consumer-entered search queries. For example, when a consumer searches for a keyword on Marketplace A, a ranked list of relevant items is presented, as illustrated in Figure \ref{fig:Search Listings display}. The task of determining the order of these listings constitutes the KR problem\label{s3:KR}.

The second common location for sponsored products is the product detail page, which consumers reach after clicking on a specific product. In this context, sponsored products may appear in sections such as "Sponsored Related Items," "Products Related to This Item," or "You May Also Like," with specific terminology varying across platforms. The corresponding display format in Marketplace A is shown in Figure \ref{fig:Sponsored listings display}. The ranking of items in this section, which includes both sponsored and non-sponsored products, is referred to as \textit{Sponsored Listings Ranking} (SLR) in our partner marketplace and is the primary focus of our study. The main product featured on the page is typically referred to as the \textit{seed item}. As with other marketplaces, Marketplace A notifies users that the listings may include sponsored items, but does not disclose which specific items are sponsored\label{s3:SLR}.

\begin{figure}
  \begin{subfigure}[c]{.5\linewidth}
    \centering
    \includegraphics[width=\linewidth]{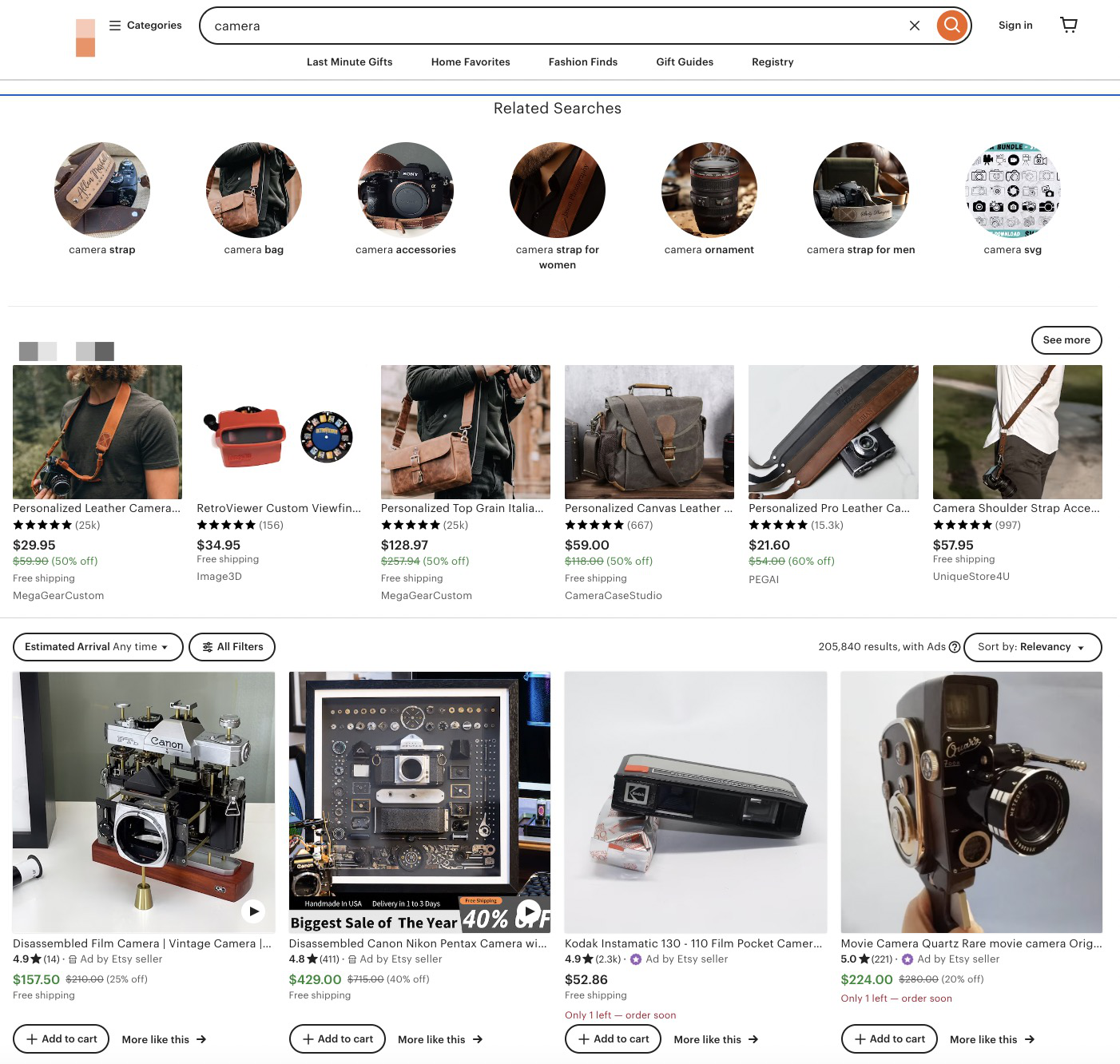}%
    \caption
      {%
        Keyword Ranking%
        \label{fig:Search Listings display}%
      }%
  \end{subfigure}\hfill

}
\end{center}
\vspace{0.2cm}

The dataset provided to us consists of three components. The first comprises aggregated results from the full-scale experiment, including: (1) the average number of purchases per consumer for each group (\textit{Purchases}); (2) the total expenditure (USD) per consumer for each group (\textit{GMV}); (3) the total revenue generated for the partner marketplace (USD) per consumer for each group (\textit{Revenue}); and (4) the average transaction price per consumer among those who made at least one purchase during the experiment for each group (\textit{Transaction Price}). All of these measures only document the activities from the SLR. We use this data to evaluate the effectiveness of our proposed LP-based algorithm in Section~\ref{subsec:main_result}\label{s3:datafirst}.

Because of the vast scale of the full dataset and practical restrictions related to data privacy, the partner marketplace was unable to provide us with impression-level data for the entire experiment. Instead, we were given a randomly sampled subset of the full-scale experiment, comprising 690,611 impressions per experiment group and 2,071,833 impressions in total. Recall that an impression refers to an instance in which a user visits a webpage displaying a designated \textit{seed item} along with a sponsored listing. For each impression, we observe several variables: the average product price in the sponsored listing, the average normalized PTR weighted by position weights, the total number of clicks, the number of purchases, the corresponding revenue generated for the partner marketplace, the date of the impression, and the experiment group assignment. Summary statistics for this sample are reported in Appendix \ref{subsec:sampledata}. Notice that the consumer ID of each impression was not provided to us for this sample. Although the data are randomly sampled, there remain some performance differences between the sampled dataset and the full dataset, which we will further discuss in Section~\ref{sec:exp_result}. We primarily use this dataset to analyze the mechanisms through which our proposed LP-based algorithm achieves superior performance. As we will show, the core advantage of our modeling approach in balancing the revenue–relevance tradeoff remains robust.\label{s3:onlineimpression}

To complement our mechanism analysis, the third dataset consists of a random sample drawn from the 17 days preceding the launch of the field experiments, containing 266,003 impressions and 13,370,159 unique items. Unlike the random sample obtained during the experiments, this dataset provides richer detail at the item level. For each impression, we can observe the price, the \textit{take rate}, the \textit{ad rate}, the normalized PTR, and the ranking for each item in the candidate set. In addition, the data documents whether an item appeared in the final sponsored listing, whether a consumer clicked on or purchased a specific item in the listing, as well as the price and revenue of the \textit{seed item}, the seller ID, and the consumer ID. A distinction is that, whereas the field experiment data only covers registered users with user IDs (which enable randomization), this pre-experiment sample may include impressions without user IDs. Specifically, 164,737 impressions are from 160,694 registered users, while the remaining 101,266 impressions are from unregistered users. Summary statistics for this dataset are provided in Appendix \ref{subsec:sampledata}.

\section{Field Experiment Analysis and Results}\label{sec:exp_result}
This section reports our empirical analysis and principal findings. We first assess the effectiveness of the proposed LP-based algorithm using the key outcome variables defined by the partner marketplace (Section~\ref{subsec:main_result}). We then examine the mechanisms underlying its superior performance relative to the score-based algorithm (Section~\ref{subsec:mechanism}). Finally, we consider additional factors pertinent to the generalizability and limitations of the experiment results (Section~\ref{subsec:counfouding}).

\subsection{Does LP-Based Algorithm OutPerform the Score-Based Algorithm?}\label{subsec:main_result}
We evaluate algorithmic effectiveness in the partner marketplace using three outcome variables, as defined in Section~\ref{subsec:data}: (1) the average number of purchases per consumer (\textit{Purchases}); (2) total expenditure per consumer (\textit{GMV}); and (3) total revenue generated for the partner marketplace per consumer (\textit{Revenue}). The main results are reported in Table~\ref{tab:Stat financial performance}. Due to confidentiality agreement, we report only the relative changes of the two treatment groups (\textbf{LP90} and \textbf{LP95}) under the proposed LP-based algorithm compared with the control group (\textbf{Benchmark}), i.e., the score-based algorithm. All monetary values are calculated in USD. We also present the corresponding p-values from pairwise t-tests.\label{s3:outcomemeasure}

\vspace{0.2cm}
\begin{table}[ht]
\centering
\begin{threeparttable}
\caption{LP-Based Algorithm vs Score-Based Algorithm.} 
\begin{tabular}{ccc}
    \hline \hline
    & LP90 & LP95 \\
    & (Treatment 1) & (Treatment 2) \\
    \hline
    \textit{Purchases} & 1.55\%*** \quad (<0.001) & 3.22\%*** \quad (<0.001)  \\
    \textit{GMV} & 1.39\% \quad (0.11)  & 2.96\%*** \quad (<0.001) \\
    \textit{Revenue} & 1.80\%** \quad (0.03) & 0.81\% \quad (0.32)  \\
    
    
    \hline \hline
    \end{tabular}
\begin{tablenotes}
\footnotesize
\item Notes. We report the relative changes of the two treatment groups (\textbf{LP90} and \textbf{LP95}) compared with control group (\textbf{Benchmark}), i.e., the score-based algorithm. Associated $p$-values from pairwise $t$-tests are reported in parentheses. The sample size in Treatment 1 is $N_{\text{treatment1}} = 10{,}686{,}419$, in Treatment 2 is $N_{\text{treatment2}} = 10{,}689{,}158$, and in the Control group is $N_{\text{control}} = 10{,}682{,}744$. *** $p<$ 0.01, ** $p<$ 0.05, * $p<$ 0.1.
\end{tablenotes}
\label{tab:Stat financial performance}
\end{threeparttable}
\end{table}
\vspace{0.2cm}  

First, in terms of consumer purchases, the \textbf{LP95} treatment leads to a statistically significant increase of 3.22\% ($p<0.001$), while the \textbf{LP90} treatment yields a smaller but still significant increase of 1.55\% ($p<0.001$). This suggests that the LP-based algorithm is more effective at stimulating demand and generating additional consumer transactions than the score-based algorithm.\label{s3:outcomeresult1}

Second, with respect to GMV, the \textbf{LP95} treatment yields a statistically significant increase of 2.96\% ($p<0.001$), whereas the \textbf{LP90} treatment shows only a modest increase of 1.55\%, which is statistically insignificant. Recall that the relevance parameter $\lambda$ is higher in \textbf{LP95}. The stronger performance of \textbf{LP95} relative to \textbf{LP90} highlights the importance of a sufficiently stringent relevance threshold in driving consumer spending compared with the score-based algorithm. Importantly, the improvements in GMV are not attributable to changes in transaction prices: the relative changes in average transaction price are small and statistically insignificant for both treatments (\textbf{LP95}: –0.23\%; \textbf{LP90}: –0.20\%). This pattern suggests that the gains in GMV arise primarily from higher purchase volumes rather than higher prices, and that such volume-driven growth materializes only when the relevance threshold is sufficiently high. 

Finally, with respect to marketplace revenue, \textbf{LP95} produces only a modest and statistically insignificant gain of 0.81\% ($p=0.32$), whereas \textbf{LP90} generates a statistically significant increase of 1.80\% ($p=0.03$). Taken together with the GMV results, this suggests that higher consumer spending does not necessarily translate into higher marketplace revenue, in part because revenue also depends on advertising income. These patterns point to a relevance–revenue tradeoff: while a higher relevance threshold (\textbf{LP95}) enhances consumer spending, it does not substantially increase marketplace revenue, whereas a lower relevance threshold (\textbf{LP90}) has limited impact on GMV but raises revenue through greater advertising returns. It is important to note that in our experiment design, both treatment groups have higher average relevance scores than the benchmark score-based algorithm (see Section~\ref{subsec:experiment design}). Thus, the \textbf{LP90} results demonstrate the superiority of the LP-based algorithm in improving marketplace revenue while simultaneously enhancing the relevance of sponsored listings.

\subsection{Why the LP-Based Algorithm Outperforms: Empirical Insights}\label{subsec:mechanism}
In this section, we present empirical evidence to explain why the LP-based algorithm outperforms the score-based benchmark. We begin with a heterogeneity analysis, showing that listings facing a stronger revenue–relevance tradeoff benefit more from our method relative to the score-based algorithm. We then analyze the characteristics of items prioritized by the LP-based algorithm, offering deeper insight into the mechanisms underlying its superior performance. Because this analysis requires fine-grained item-level information, we rely on the richer impression-level data collected before the experiment’s launch, and we complement it with aggregated listing-level evidence from the sampled impression-level data in the field experiment. Finally, drawing on these results together with the theoretical arguments in Section~\ref{subsec:whytheoretical}, we synthesize the evidence to highlight why the LP-based algorithm outperforms the score-based benchmark\label{s3:mechanismlogic}.

\subsubsection{Sponsored Listings that Benefit More}\label{subsubsec:heterogeneity}
In this section, we conduct a heterogeneous treatment effect analysis to provide empirical evidence for why our LP-based algorithm outperforms. Using the sampled impression-level data from the full experiment, we focus on how the algorithm addresses the core revenue–relevance tradeoff. The strength of this tradeoff varies across contexts, and our framework should perform particularly well when it is more pronounced. One natural dimension along which this occurs is the displayed price of sponsored listings. For higher-priced sponsored listings, the forces shaping conversion probability and revenue potential diverge more sharply: on the one hand, expensive products typically face lower baseline purchase probabilities, a pattern we also confirm in our data; on the other hand, each successful conversion yields disproportionately higher revenue for the platform. Thus, effectively managing the revenue–relevance tradeoff becomes especially critical for higher-priced listings, and it is precisely in such cases where our LP-based approach is expected to deliver the greater gains\label{s3:HTEargument}. It is worth noting that the \textbf{LP90} group is the one that can illustrate this argument, since it is the group that achieves higher revenue than the benchmark based on Table~\ref{tab:Stat financial performance}. As discussed in Section~\ref{subsec:data}, our sampled dataset exhibits performance differences compared to the full dataset. The key observation is that while the \textbf{LP90} group did not show a significantly higher number of purchases than the score-based benchmark, it still achieved higher overall revenue. Detailed results are provided in Appendix~\ref{subsec:representation}.

To evaluate our hypothesis, we estimate the following empirical specification:
\small
\begin{align*}
    Y_{i} 
    & = 
    \beta_{0} + \beta_{1} \times \text{Algo90}_i   + \beta_2 \times \textit{LogAvgPrice}_{i} + \beta_3 \times \textit{WeightedPTR}_{i} 
  + \beta_4 \times \text{Algo90}_i \times \textit{LogAvgPrice}_{i}
     + \eta_t + \epsilon_{i},
\end{align*}
\normalsize
where for each impression $i$, the dependent variable $Y_{i} \in \{$\textit{Purchases}, \textit{Revenue}$\}$. $\textit{Purchases}$ measures the number of purchases per impression, and $\textit{Revenue}$ denotes the revenue generated for the partner marketplace per impression. $\text{Algo90}_i$ is the treatment indicator capturing assignment to the \textbf{LP90} with the omitted category being the \textbf{Benchmark}. $\textit{LogAvgPrice}_i$ is the logarithm of the average product price of all items in the sponsored listing within impression $i$; the log transformation reduces the influence of extreme price values. $\textit{WeightedPTR}_i$ denotes the average normalized PTR, weighted by position weights for the sponsored listing in impression $i$. We include day fixed effects $\eta_t$ to control for time-varying heterogeneity. Our main parameter of interest is $\beta_4$, which captures the interaction between algorithm assignment and displayed product price, and is expected to be positive.

\vspace{0.2cm}
\begin{table}[ht]
\centering
\begin{threeparttable}
\caption{Sponsored Listings that Benefit More.} 
\begin{tabular}{ccc}
    \hline \hline
    & Purchases & Revenue \\
    & (1) & (2) \\
    \hline
    LP90 & -0.019*** & -0.091**\\
    & (0.002) & (0.044)\\
    LogAvgPrice & -0.076*** & -0.054***\\
    & (0.000) & (0.015) \\
    WeightedPTR & 0.002*** & 0.212*** \\
    &  (0.000) & (0.006)\\
    LP90 $\times$ LogAvgPrice & 0.004*** & 0.034**\\
      & (0.000) & (0.015) \\
    \hline \hline
    \end{tabular}
\begin{tablenotes}
\footnotesize
\item Notes. Observation numbers are $N = 1,381,222$. Robust standard errors are reported in parentheses. *** $p<$ 0.01, ** $p<$ 0.05, * $p<$ 0.1.
\end{tablenotes}
\label{tab:heterogeneity}
\end{threeparttable}
\end{table}
\vspace{0.2cm}  

Table~\ref{tab:heterogeneity} reports the results of the heterogeneity analysis. First, consistent with intuition, sponsored listings with a higher average displayed price tend to receive fewer purchases and generate lower platform revenue, whereas listings with a higher predicted transaction rate (PTR) attract more purchases and contribute to higher revenue. Second, the negative coefficient on \textbf{LP90} arises because the random sample does not show a significantly higher number of purchases for the \textbf{LP90} group, suggesting that for lower-priced product listings, our algorithm delivers limited incremental value. Third, the positive and significant coefficient on the interaction term $LP90 \times \log(AvgPrice)$ supports our prediction. Specifically, for sponsored listings with higher average product prices, where the revenue–relevance tradeoff is more pronounced, the LP-based algorithm demonstrates greater improvement relative to the score-based benchmark\label{s3:HTEresults}.

\subsubsection{Item Characteristics Favored by the LP-based Algorithm}\label{subsubsec:lpfavored}
To further understand why the LP-based algorithm can outperform the score-based algorithm, we next examine what types of items it tends to favor. 

Given the data limitations, our sampled impression-level data from the experiment only includes the aggregated characteristics of sponsored listings, but our exercise needs item-level information. To achieve this, we leverage the impression-level data collected prior to the experiment’s launch, which provides richer item-level details. For all data details, please refer to Section \ref{subsec:data}. 

This dataset enables us to disentangle the two key stages of ranking algorithms (LP-based versus score-based): \textit{item selection}, which determines the set of products included in the sponsored listing, and \textit{conditional ranking}, which specifies their ordering once selected. Distinguishing between these stages is important, as it allows us to assess not only whether the LP-based algorithm selects a different set of items compared to the score-based benchmark, but also whether it employs a distinct logic in ranking them. 

To achieve our goal, we simulate both the score-based algorithm and the LP-based algorithm (with $\lambda \in \{0.90, 0.95\}$) on the same candidate set for each impression, producing three distinct ranked lists with an equal number of displayed items, set to the partner marketplace’s average display size. The simulation reveals substantial divergence in \textit{item selection}. Out of 13,370,159 candidate items across 266,003 impressions, the score-based algorithm selects 2,579,511 unique items, while the LP-based algorithm selects 2,530,810 items ($\lambda = 0.90$) and 2,535,457 items ($\lambda = 0.95$). The overlap between the score-based and LP-based selections is only 1,395,573 items ($\lambda = 0.90$) and 1,419,438 items ($\lambda = 0.95$), corresponding to overlap ratios of just $37.6\%$ and $38.4\%$ of the combined unique item sets. These statistics provide early evidence that the two ranking algorithms employ markedly different logics in selecting items for sponsored listings.

We implement the following empirical strategy to evaluate how the algorithms differ in \textit{item selection}. The analysis is conducted separately for the \textbf{LP90} and \textbf{LP95} algorithms at the item level. For each comparison (\textbf{LP90} vs. score-based algorithm or \textbf{LP95} vs. score-based algorithm), we construct a universe consisting of all unique items shown by either algorithm across the dataset. The following logistic regression specifications are estimated:
\begin{align*}
     P(\text{Selected}_{j}^{LP_k} = 1) = \Lambda(\beta_{01} + \beta_{11} \text{ExpectedRevenue}_j + \beta_{21} \text{Popularity}_j ),\\
   P(\text{Selected}_{j}^{LP_k} = 1) = \Lambda(\beta_{02} + \beta_{12} \text{OrganicRevenue}_j + \beta_{22} \text{AdRevenue}_j + \beta_{32} \text{Popularity}_j ).
\end{align*}
where $\Lambda(\cdot)$ denotes the logistic function. For each unique item $j$, the dependent variable $\text{Selected}_{j}^{LP_k}$ equals 1 if the item is selected by the LP algorithm with constraint $\lambda = k$ ($k \in \{0.90, 0.95\}$), and 0 otherwise. $\text{ExpectedRevenue}_j$ denotes the expected revenue of item $j$, computed using Equation~\ref{eq:ranking score expected revenue} with $w=1$ and $t_j$ set to the average \textit{take rate} of our partner marketplace, since the item-level \textit{take rate} is not observed. Because the \textit{take rate} exhibits little variation in our setting, this approximation suffices to illustrate the mechanism of our method. $\text{Popularity}_j$ is defined as the total number of times item $j$ appears in the candidate set and serves as a proxy for item popularity. $\text{OrganicRevenue}_j$ corresponds to the \textit{organic revenue} component, given by $r_j \times t_j \times p_j$ in Equation~\ref{eq:ranking score expected revenue}, while $\text{AdRevenue}_j$ corresponds to the \textit{ad revenue} component, given by $r_j \times \text{ad}_j \times p_j$. 

\vspace{0.2cm}
\begin{table}[ht]
\centering
\begin{threeparttable}
\caption{Item Characteristics Favored by the LP-Based Algorithm: Item Selection.} 
\begin{tabular}{ l p{2.2cm} p{2.2cm} p{2.2cm} p{2.2cm}  }
    \hline \hline
    & \multicolumn{2}{c}{LP90}  & \multicolumn{2}{c}{LP95}  \\
    & (1) & (2) & (3) & (4) \\
    \hline
    ExpectedRevenue & 0.764*** &    & 0.207*** &  \\
    &  (0.006) &  & (0.003) &  \\
    OrganicRevenue & &  0.148*** &  & 0.072***  \\
     &   &(0.004) &  & (0.002) \\
    AdRevenue &  & 3.163*** & & 0.817*** \\
     &  & (0.019) &  & (0.010) \\
    Popularity & 0.118*** & 0.114*** & 0.135*** & 0.133***\\
    & (0.001) & (0.001)& (0.001)& (0.001)\\
    Observations & 3,793,666 & 3,793,666  &3,793,666 & 3,793,666  \\
    \hline \hline
    \end{tabular}
\begin{tablenotes}
\footnotesize
\item Notes. Robust standard errors are reported in parentheses. *** $p<$ 0.01, ** $p<$ 0.05, * $p<$ 0.1.
\end{tablenotes}
\label{tab:selection_model}
\end{threeparttable}
\end{table}
\vspace{0.2cm}  
Results on item characteristics favored by the LP-based algorithm in the selection stage are reported in Table~\ref{tab:selection_model}. Two patterns are noteworthy. First, by more effectively capturing the revenue–relevance tradeoff, both \textbf{LP90} and \textbf{LP95} identify items with higher expected revenue, encompassing both organic and advertising revenue, relative to the score-based benchmark in the \textit{item selection} stage. Second, our LP-based algorithms favor more popular items. A potential concern is that such selection may exacerbate inequities in online marketplaces, as niche products face reduced opportunities for promotion. An advantage of our LP-based framework, however, is its flexibility: fairness constraints can be readily incorporated (e.g., \citealt{luzhu2025}), and we provide additional extensions along these lines in Appendix \ref{sec:overall planning}\label{s3:itemselectionresults}.

Next, we move to the evaluation of differences in \textit{conditional ranking}. We adopt a similar strategy with the following specification:
\begin{align*}
     P(\text{Promoted}_{ij}^{k} = 1 ) = \Lambda(\beta_{01} + \beta_{11} \text{ExpectedRevenue}_j+ \beta_{21} \text{Popularity}_j  ),\\
   P(\text{Promoted}_{ij}^{k} = 1 ) = \Lambda(\beta_{02} + \beta_{12} \text{OrganicRevenue}_j + \beta_{22} \text{AdRevenue}_j+ \beta_{32} \text{Popularity}_j ).
\end{align*}
This impression–item–level analysis is restricted to the subset of items appearing in both algorithms’ lists for a given impression $i$. For each such item $j$, the dependent variable $\text{Promoted}_{ij}^{k}$ equals 1 if $\text{Rank}_{ij}^{\text{LP}} < \text{Rank}_{ij}^{\text{score}}$ (i.e., the item is ranked higher by the LP algorithm) and 0 otherwise. The evaluation is again conducted separately for \textbf{LP90} and \textbf{LP95}.

\vspace{0.2cm}
\begin{table}[ht]
\centering
\begin{threeparttable}
\caption{Item Characteristics Favored by the LP-based Algorithm: Conditional Ranking.} 
\begin{tabular}{ l p{2.4cm} p{2.4cm} p{2.4cm} p{2.4cm} }
    \hline \hline
    & \multicolumn{2}{c}{LP90}  & \multicolumn{2}{c}{LP95}  \\
    & (1) & (2) & (3) & (4) \\
    \hline
    ExpectedRevenue &0.001*** &   & 0.001*** &  \\
    &  (0.000) &  & (0.000) &  \\
    OrganicRevenue & & 0.004***  & & 0.004***  \\
     & & (0.001) & &  (0.001) \\
    AdRevenue & & -0.002*** &  & -0.002***   \\
     &  &(0.000) & & (0.000) \\
    Popularity & -0.001*** & -0.001*** & -0.001*** & -0.001***\\
    & (0.000)& (0.000) & (0.000) & (0.000) \\
    Observations & 1,574,818  & 1,574,818  &  1,602,775 & 1,602,775\\
    \hline \hline
    \end{tabular}
\begin{tablenotes}
\footnotesize
\item Notes. Robust standard errors are reported in parentheses. *** $p<$ 0.01, ** $p<$ 0.05, * $p<$ 0.1.
\end{tablenotes}
\label{tab:ranking_model}
\end{threeparttable}
\end{table}
\vspace{0.2cm}  

Results on the item characteristics favored by the LP-based algorithm during the ranking stage are reported in Table~\ref{tab:ranking_model}. Consistent with the selection stage, the algorithm continues to favor items with higher expected revenue, indicating that the LP-based approach can effectively capture the revenue–relevance tradeoff not only when deciding which items to include, but also when determining how to rank them relative to each other. A key difference emerges, however, in how the algorithm weights the components of revenue. In the ranking stage, the algorithm tends to promote items with higher organic revenue but down-weight items with higher ad revenue. This suggests that once the set of sponsored listings is determined, the algorithm emphasizes consumer-driven revenue rather than ad-driven revenue when refining the ranking. Another noteworthy finding is that the LP-based algorithm systematically de-emphasizes more popular items in the ranking stage. By less aggressively promoting popular items, the LP-based approach helps mitigate concerns about fairness and concentration of exposure across items in the selection stage\label{s3:conditionalrankingresults}.

Similar patterns can also be seen in the aggregated sponsored-listing-level evidence from the sampled impression-level data in the field experiment. We report the results in Appendix \ref{subsec:itemsupplementary}.

\subsubsection{Summarization of Why the LP-Based Algorithm Outperforms}\label{subsubsec:summarizationwhy}
The LP-based algorithm outperforms the score-based benchmark because it tackles the core challenge of balancing the revenue–relevance tradeoff in a more principled and flexible manner. Whereas the score-based approach frames this tradeoff as a weighted tension between \textit{organic revenue} and \textit{ad revenue}, the LP-based framework directly maximizes total revenue while imposing a minimum relevance constraint, thereby reflecting the practical requirement that marketplaces must safeguard consumer experience even as they pursue revenue growth. Empirically, we show that the performance gains of the LP-based algorithm are larger precisely when the revenue–relevance tradeoff is more pronounced. Moreover, the algorithm proves advantageous in both the \textit{item selection} and \textit{conditional ranking} stages, consistently identifying items with higher expected revenue. Finally, in Appendix~\ref{subsec:revenuerelevance}, we document additional evidence that illustrates how the revenue–relevance tradeoff operates in practice\label{s3:summarizationwhy}.

\subsection{Generalizability and Limitations of the Experiment Results }\label{subsec:counfouding}
Although our field experiment provides empirical evidence for the superiority of the LP-based algorithm, we acknowledge two concerns regarding the generalizability and limitations of our findings.

First, all outcome variables in our main analysis capture only activities within the SLR. A possible concern is that while purchases and revenues from the SLR increase, they could partly cannibalize the purchases and revenues of the \textit{seed item}. To assess this possibility, we analyze the sampled data collected prior to the launch of the field experiment, which includes revenue information for the \textit{seed item} (Appendix \ref{subsec:SLRKR}). Our results suggest that such cannibalization is unlikely to be substantial.

Second, our experiment focuses on short-run effects, raising the question of whether the observed treatment effects persist in the long term. On the one hand, the main advantage of the LP-based framework lies in its ability to capture the revenue–relevance trade-off, a mechanism that we believe can be sustained in ranking scenarios. Indeed, our empirical evidence in Section \ref{subsec:mechanism} supports this mechanism. On the other hand, we attempt to provide additional evidence on the potential for long-term persistence using the available data (Appendix \ref{subsec:longterm}). While these analyses are limited, they offer suggestive insights that our experiment
results are likely to hold in the long run\label{s3:longrun}.

\section{Conclusion}\label{sec:conclusion}
In this paper, we propose a novel approach to ranking items in sponsored listings that not only improves performance but also satisfies industry latency requirements. Specifically, we develop a LP–based algorithm for sponsored listings. Beyond establishing its theoretical properties, we conduct a large-scale field experiment on a major U.S. online marketplace. The results validate both the superiority of our LP-based algorithm over the industry’s score-based benchmark and its applicability in real-time settings. Complementary empirical analyses further shed light on the mechanisms driving its effectiveness. The key advantage of our framework lies in its principled treatment of the revenue–relevance tradeoff: rather than framing it as a weighted balance between \textit{organic revenue} and \textit{ad revenue}, as in the score-based approach, our algorithm directly maximizes total revenue subject to a minimum relevance constraint. We further highlight the flexibility of LP-based methods in incorporating broader planning constraints.

Several important avenues remain for future research. First, while our empirical evidence focuses on sponsored listings, the framework could be extended to KR settings where ranking plays a critical role, such as in our partner marketplace. A key distinction is that sponsored items are typically specified to consumers in KR, a challenge that would need to be addressed in such extensions. Second, following industry practice, we estimated position weights ($h_i$) and PTR separately and treated them as given when optimizing the ranking process. Developing efficient methods to jointly estimate $h_i$ and PTR represents an interesting and important next step. Finally, although we provide preliminary evidence on the long-term effects of our LP-based algorithm, richer field experiments specifically designed to capture sustained outcomes would yield deeper insights into its long-run impact (\citealt{zhanglongterm2020}).\label{s3:limitation} In particular, future investigations could examine how the estimation of PTR should be adapted to capture consumers’ behavioral changes under the new ranking algorithm, as well as how advertisers adjust their bidding and advertising strategies in response.

\bibliographystyle{apalike}
\bibliography{Refer}

\newpage
\appendix

\newpage 
\setcounter{table}{0} \setcounter{figure}{0} %
\setcounter{equation}{0} \setcounter{subsection}{0}
\setcounter{page}{1} %
\setcounter{footnote}{0} %

\renewcommand{\thetable}{A\arabic{table}} %
\renewcommand{\thepage}{Online Appendix Page \arabic{page}}
\renewcommand{\thesubsection}{A.\arabic{subsection}}%
\renewcommand{\thesubsection}{A.\arabic{subsection}}

\doublespacing
\section*{Online Appendix}
\subsection{Computation Results}\label{sec:Computational results for LP-based algorithm}
In this section, we compare the performance of our proposed LP-based algorithm and Gurobi, a commercial solver, for solving synthetic instances that mimic the SLR problem in practice. Recall that our goal is to solve the MIP formulation \hyperref[problem:MIP]{$P_0$}. Unfortunately, no MIP solver can possibly solve \hyperref[problem:MIP]{$P_0$} within the latency requirement or even close. To meet the latency requirement, we consider two approaches: (i) the proposed LP-based Algorithm \ref{al:find_opt_x}, which outputs a feasible, integral, and nearly optimal solution to \hyperref[problem:MIP]{$P_0$}; (ii) commercial LP solver Gurobi to solve the relaxed LP formulation \eqref{problem: OPT_LP}, which often outputs an infeasible and fractional solution to \hyperref[problem: OPT_LP]{$P_1$}, but it serves a good benchmark for running time comparison.
 
In all of our experiments, we simulate each instance (a given pair of $m, n, \lambda$) 1000 times and present the average results. For Gurobi, we used the concurrent method Gurobi 10.0, which runs primal simplex, dual simplex, and barrier methods simultaneously. We set the dual feasibility tolerance to be $10^{-4}$ (for Gurobi) and the $\epsilon$ to be $10^{-4}$ (for LP-based Algorithm \ref{al:find_opt_x}). The experiments run on the Dell PowerEdge R640 computing node with $20$GB of memory. It is noted that the running time for Gurobi listed below only involves the solving time and does not include the model building time. In practice, the time latency ($0.1$ seconds) is required for the sum of building and solving time. To exhibit the efficiency of the LP-based algorithm, we designed two experiments. 


First, due to the special structure of the formulated SLR problem, $\lambda$ plays an essential role in determining the computational cost of solving the problem. With a small $\lambda$, the relevance constraint \eqref{eq:LP lambda const} might be redundant, thus resulting in the LP-based algorithm solving extremely fast. Therefore, we would like to compare the LP-based algorithm's and Gurobi's performances with a wide range of $\lambda \in [0,1]$. In the first experiment, by fixing the problem size ($m = 50$ slots and $n = 500$ products, which is typically the largest problem size we might encounter in Marketplace A's SLR problem) and independently uniformly generating random variables ($h, r, v \sim U(0,1)$), we set up the test for multiple $\lambda$ ranging from $0$ to $1$, i.e., $\lambda \in \{.1, .2, .3, .4, .5, .6, .7, .8, .9, .925, .95, .975\}$. The optimal gap is calculated as (Optimal value of MIP \hyperref[problem:MIP]{$P_0$} - Approximated objective value of LP-based Algorithm \ref{al:find_opt_x})/Optimal value of MIP \hyperref[problem:MIP]{$P_0$}. As indicated in Table \ref{tab:Lambda}, we see that excluding the cases where $\lambda$ is too small ($\lambda \leq 0.5$), the proposed LP-based algorithm achieves more than $20$x time improvement over Gurobi LP solver. In the meantime, the LP-based algorithm can achieve the optimal gap of less than $0.05\%$ for each instance. 


\vspace{0.2cm}
\begin{center}
\captionsetup{labelfont=bf}
\captionof{table}{Experiment Results for Multiple $\lambda$, $m = 50$, and $n = 500$} \label{tab:Lambda} 
\scalebox{.775}{
    \begin{tabular}{c|cccccccccccc}
    \hline
    $\lambda$ &.1 &.2 &.3 &.4 &.5 &.6 &.7 &.8 &.9 &.925 &.95 &.975 \\
    \hline
    Optimality Gap (\%) 
   & 0.000* & 0.000* & 0.000* & 0.000* & 0.000* & 0.001 & 0.003 & 0.008 & 0.015 & 0.019 & 0.027 & 0.042 \\
    LP-based time (seconds) 
    & 0.001 & 0.001 & 0.001 & 0.001 & 0.004 & 0.009 & 0.01 & 0.01 & 0.01 & 0.01 & 0.011 & 0.012 \\
    Gurobi time (seconds) 
    & 0.224 & 0.205 & 0.204 & 0.205 & 0.212 & 0.231 & 0.247 & 0.264 & 0.3 & 0.317 & 0.337 & 0.362 \\
    \begin{tabular}{@{}c@{}}Time improvement \\ \scalebox{.8}{(Gurobi time/LP-based time)}\end{tabular} 
    & 289** & 269** & 262** & 252** & 48** & 24 & 26 & 27 & 30 & 31 & 30 & 30  \\
    \\
    \bottomrule 
    \end{tabular}
}
\end{center}
\vspace{0.2cm}

In the second experiment, we would like to explore the performance of the LP-based algorithm for various problem sizes, especially when the problem dimension increases. Thus, for the second experiment, by picking $\lambda = 0.95$ (which is a reasonable value for $\lambda$ as shown in the field study), we test with a range of $m \in \{10, 20, 50, 100, 200, 500\}$ and $n \in \{50, 100, 200, 500, 1000, 2000\}$ ($m \leq n$).
As shown in Table \ref{tab:Experiment results for various problem sizes}, compared to Gurobi, the LP-based algorithm achieves a massive improvement, especially under large-scale problems. We see that the time improvement grows as the problem size increases, whereas the optimal gap reduces. When $m = 500$, the LP-based algorithm is more than $100$x faster than Gurobi with the optimal gap to be $0.004\%$. Under the $0.1$ seconds latency requirement, we see that the LP-based solver can solve a problem with a size of at most $m = 500, n = 500$. Yet, under the same conditions, Gurobi can only solve one with a size of $m = 100, n = 100$. Overall, the optimal gap is usually negligible ($< 1\%$) for reasonably sized instances.

The above two experiments demonstrate the efficiency of the proposed algorithm compared to state-of-the-art commercial solvers. Indeed, the LP commercial solver is not a feasible solution due to the latency requirement of the marketplace, while our proposed Algorithm \ref{al:find_opt_x} enables such an LP approach for sponsored listing.

\vspace{0.2cm}
\begin{center}
\begin{table}[h]
\captionsetup{labelfont=bf}
\captionof{table}{Experiment Results for Various Problem Sizes} \label{tab:Experiment results for various problem sizes} 
\footnotesize 
\scalebox{.925}{
\begin{tabular}{c|cccccc} 
\toprule
\multicolumn{7}{c}{Panel A: LP-based time (seconds)} \\ \midrule
\diagbox{m}{n} 
    & 50 & 100 & 200 & 500 & 1000 & 2000 \\ \hline







10 & 0.001 & 0.001 & 0.003 & 0.006 & 0.011 & 0.022 \\ 

20 & 0.001 & 0.002 & 0.003 & 0.007 & 0.013 & 0.027 \\ 

50 & 0.001 & 0.002 & 0.005 & 0.011 & 0.021 & 0.041 \\ 

100 & & 0.003 & 0.006 & 0.017 & 0.034 & 0.067 \\ 

200 & & & 0.011 & 0.027 & 0.061 & 0.128 \\ 

500 & & & & 0.061 & 0.125 & 0.283 \\ 

\bottomrule
\end{tabular}
\hfill
\begin{tabular}{c|cccccc} 
\toprule
\multicolumn{7}{c}{Panel B: Gurobi time (seconds)} \\ \midrule
    \diagbox{m}{n} 
    & 50 & 100 & 200 & 500 & 1000 & 2000 \\ \hline













10 & 0.003 & 0.006 & 0.012 & 0.038 & 0.102 & 0.326 \\ 

20 & 0.007 & 0.014 & 0.027 & 0.085 & 0.225 & 0.671 \\ 

50 & 0.017 & 0.045 & 0.110 & 0.343 & 0.851 & 2.304 \\ 

100 & & 0.094 & 0.205 & 0.423 & 0.844 & 1.832 \\ 

200 & & & 0.639 & 1.366 & 2.468 & 5.170 \\ 

500 & & & & 8.158 & 16.339 & 28.717 \\ 

    \bottomrule
\end{tabular}
}
\bigskip

\scalebox{.97}{
\begin{tabular}{c|cccccc} 
\toprule
\multicolumn{7}{c}{Panel C: Time improvement (Gurobi time/LP time)} \\ \midrule
    \diagbox{m}{n} 
   & 50 & 100 & 200 & 500 & 1000 & 2000 \\ \hline













10 & 4.9 & 4.7 & 4.8 & 6.4 & 9.2 & 14.6 \\ 

20 & 9.2 & 9.1 & 8.6 & 11.9 & 16.9 & 25.3 \\ 

50 & 16.0 & 20.1 & 23.6 & 31.4 & 40.8 & 56.2 \\ 

100 & & 29.2 & 31.7 & 24.7 & 24.6 & 27.3 \\ 

200 & & & 58.3 & 50.0 & 40.4 & 40.4 \\ 

500 & & & & 133.7 & 130.3 & 101.5 \\ 

    \bottomrule
\end{tabular}
\hfill
\begin{tabular}{c|cccccc} 
\toprule
\multicolumn{7}{c}{Panel D: Optimal Gap (\%)} \\ \midrule
    \diagbox{m}{n} 






& 50 & 100 & 200 & 500 & 1000 & 2000 \\ \hline







10 & 0.83 & 0.573 & 0.343 & 0.159 & 0.078 & 0.039 \\ 

20 & 0.336 & 0.269 & 0.182 & 0.081 & 0.044 & 0.020 \\ 

50 & 0.062 & 0.059 & 0.052 & 0.027 & 0.015 & 0.007 \\ 

100 & & 0.015 & 0.015 & 0.01 & 0.006 & 0.004 \\ 

200 & & & 0.004 & 0.004 & 0.004 & 0.004 \\ 

500 & & & & 0.004 & 0.004 & 0.004 \\
    \bottomrule
\end{tabular}
}

\end{table}
\end{center}
\vspace{0.2cm}

\subsection{Guidance for Fine-Tuning the LP-Based Algorithm}\label{subsec:guidience}
This section provides practical guidelines for fine-tuning and implementing the LP-based algorithm. The algorithm involves only one parameter, the relevance parameter $\lambda$. As a first step, we recommend that readers estimate a suitable range of $\lambda$ for subsequent A/B testing using historical data.  

For each historical impression $k$ generated by the current ranking algorithm, assume access to $\boldsymbol{v}$, $\boldsymbol{r}$, $\boldsymbol{h}$, and the corresponding ranking $\boldsymbol{X}^{k}$. These inputs allow the computation of a relevance score (see Equation \ref{eq:MIP lambda const}):  
\begin{align}
    \text{Rel}(\boldsymbol{X}^{k})
    =
    \frac{\sum_{i=1}^{m} \sum_{j=1}^{n} h_i r_j \boldsymbol{X}_{ij}^{k}}{a}.
\end{align}

The intuition is as follows. Substituting $\text{Rel}(\boldsymbol{X}^{k})$ into Equation \ref{eq:MIP lambda const} yields $a(\text{Rel}(\boldsymbol{X}^{k}) - \lambda) \geq 0$, which simplifies to $\text{Rel}(\boldsymbol{X}^{k}) \geq \lambda$. In other words, $\lambda$ specifies a lower bound on the relevance score for future impressions.  

We suggest that practitioners construct the empirical distribution of $\text{Rel}(\boldsymbol{X}^{k})$ across historical impressions and select a summary statistic (e.g., the mean, median, or a chosen quantile) as a benchmark value, denoted $\lambda_{0}$. Implementing the LP-based algorithm with $\lambda_{0}$ ensures that the resulting relevance scores are, on average, comparable to those under the prior ranking algorithm. This provides a principled starting point for further tuning via A/B testing.

Building on this benchmark, readers can adjust $\lambda$ according to their strategic objectives. For those aiming to improve ranking quality relative to the current strategy, we recommend starting with $\lambda_{0}$ and testing incremental increases. Specifically, testing groups can be formed with relevance parameters $\lambda_{0}, \lambda_{0} + \delta, \lambda_{0} + 2\delta, \ldots$ for some user-chosen step size $\delta > 0$. As shown in our earlier analysis, higher values of $\lambda$ typically yield rankings that are more relevant.  

Conversely, practitioners seeking to prioritize revenue may explore decreasing $\lambda$ from $\lambda_{0}$, for example, $\lambda_{0}, \lambda_{0} - \delta, \lambda_{0} - 2\delta, \ldots$. Ultimately, the optimal choice of $\lambda$ should be guided by the outcomes of A/B testing, balancing the trade-off between relevance and revenue in line with organizational objectives.

\subsection{Proof of Theorem~\ref{thm:optimality_lp_algo}}\label{subsec:theorem}
In order to prove Theorem~\ref{thm:optimality_lp_algo}, we first show the following lemma that gives the closed-form solution of the LP-relaxed primal problem \hyperref[problem: OPT_LP]{$P_1$}:
\begin{lem}[Closed-form solution of the LP-relaxed primal problem] Under \Cref{ass:decreasing position weights,ass:unique_rank_order}
the optimal solution of the LP-relaxed primal problem \hyperref[problem: OPT_LP]{$P_1$}, denoted by $\boldsymbol{X}^{LP}$, has the following closed form:
    \begin{enumerate}
        \item \textbf{Case A (Nonbinding relevance constraint).} Fix $\mu = 0$ with $\boldsymbol{\delta} = \boldsymbol{v}$, then if \\
        $\text{Rel}(\text{PrimalResponse}(\boldsymbol{\delta},m)) \geq \lambda a$, then the relevance constraint is nonbinding with $\mu^* = 0$, and the LP optimal solution is integral: $\boldsymbol{X}^{LP} = \text{PrimalResponse}(\boldsymbol{\delta},m)$.
        
        \item \textbf{Case B (Binding relevance constraint).} The relevance constraint binds when $\mu^* > 0$. For any dual variable $\mu$, we define its associated integral solution as \\
        $\boldsymbol{X}_{+}(\mu) := \lim_{t \rightarrow 0}\text{PrimalResponse}(\boldsymbol{v} + (\mu+t)\boldsymbol{r},m)$, then the optimal dual variable $\mu^*$ of the problem \hyperref[problem: OPT_LP_Dual]{$D_1$} is characterized as the smallest pairwise indifference value such that its associated integral solution is feasible, i.e., : 
        \begin{equation}
            \mu^* = \inf \{\mu \in \{0\} \cup \{\mu_{jk}\}_{(j,k) \in \mathcal{I}} : \text{Rel}(\boldsymbol{X}_{+}(\mu)) \geq \lambda a\}.
        \end{equation}
        Then for such $\mu^*$, let $\boldsymbol{\delta}^* := \boldsymbol{v} + \mu^* \boldsymbol{r}$ and its sorted indices in non-increasing order as $j_{(1)},\cdots,j_{(n)}$.
        \begin{itemize}
            \item \textbf{B1 (Boundary tie between position $m$ and $m+1$).} If $\delta^*_{j(m)} = \delta^*_{j(m+1)}$, then the optimal solution is fractional only in the last row:
            \begin{equation}
                \boldsymbol{X}^{\text{LP}}_{i,j(i)} = 1 \; \forall \; i < m,
                \boldsymbol{X}^{\text{LP}}_{m,j(m)} = \alpha,
                \boldsymbol{X}^{\text{LP}}_{m,j(m+1)} = 1 - \alpha, 
                \boldsymbol{X}^{\text{LP}}_{ij} = 0 \text{ otherwise},
            \end{equation}

            \item \textbf{B2 (Interior tie within the top $m$ positions).} If $\delta^*_{j(s)} = \delta^*_{j(s+1)}$ for some $s < m$, then optimal solution is fractional only in row $s$ and $s+1$:
            \begin{align}
                &\boldsymbol{X}^{\text{LP}}_{i,j(i), } = 1 \; \forall \; i \in [m]\setminus\{s,s+1\},
                \boldsymbol{X}^{\text{LP}}_{s,j(s)} = \boldsymbol{X}^{\text{LP}}_{s+1,j(s+1)} = 1 - \alpha,
                \nonumber
                \\
                &\boldsymbol{X}^{\text{LP}}_{s,j(s+1)} = \boldsymbol{X}^{\text{LP}}_{s+1,j(s)} = \alpha,
                \boldsymbol{X}^{\text{LP}}_{ij} = 0 \text{ otherwise},
            \end{align}

            \item \textbf{B3 (No tie within the top $m$ positions).} If $\delta^*_{j(1)} > \cdots > \delta^*_{j(m)}$, i.e., all top-m scores are strictly ordered, then the optimal LP solution is integral:
            \begin{equation}
                \boldsymbol{X}^{\text{LP}}_{i,j(i)} = 1 \; \forall \; i \in [m],
                \boldsymbol{X}^{\text{LP}}_{ij} = 0 \text{ otherwise}.
            \end{equation}
        \end{itemize}
    \end{enumerate}
\end{lem}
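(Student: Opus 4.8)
The plan is to analyze \hyperref[problem: OPT_LP]{$P_1$} through the partial Lagrangian relaxation of the relevance constraint \eqref{eq:LP lambda const} set up in \hyperref[problem: OPT_LP_Dual]{$D_1$}, and then read off the primal optimum from strong duality and complementary slackness. First I would record two structural facts. (i) \emph{Inner problem:} for fixed $\mu \ge 0$, $\mathcal F$ is the fractional bipartite-matching polytope, whose constraint matrix is totally unimodular, so $\max_{\boldsymbol X\in\mathcal F}\sum_{i,j} h_i \boldsymbol X_{ij}(v_j+\mu r_j)$ has an integral optimizer; by a one-swap rearrangement argument using \Cref{ass:decreasing position weights} (pairing a larger $h_i$ with a larger $\delta_j = v_j+\mu r_j$ never hurts), the optimizer equals $\text{PrimalResponse}(\boldsymbol v+\mu\boldsymbol r,m)$ when the sorted scores are strictly ordered in the top $m$, and equals any convex combination of the tie-resolving PrimalResponse outputs otherwise. (ii) \emph{Outer problem:} $D(\mu)$ is a finite maximum of affine functions of $\mu$ (one per vertex of $\mathcal F$), hence convex and piecewise linear, with breakpoints contained in $\{\mu_{jk}\}_{(j,k)\in\mathcal I}$ since a kink occurs exactly where two scores $v_j+\mu r_j$ and $v_k+\mu r_k$ cross; the text already notes that $\sum_{i,j}h_i r_j \boldsymbol X_{ij}(\mu)-\lambda a$ is a subgradient of $D$ at $\mu$, and convexity of $D$ makes $\mu\mapsto\sum_{i,j}h_i r_j \boldsymbol X_{+,ij}(\mu)$ nondecreasing.

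Next I would establish the dichotomy and the formula for $\mu^*$. Strong duality holds because \hyperref[problem: OPT_LP]{$P_1$} is feasible (the maximal-relevance ranking lies in $\mathcal P\subseteq\mathcal F$) and bounded, so any primal optimum $\boldsymbol X^{\mathrm{LP}}$ and dual optimum $\mu^*$ satisfy $\boldsymbol X^{\mathrm{LP}}\in\argmax_{\boldsymbol X\in\mathcal F}\sum_{i,j} h_i\boldsymbol X_{ij}(v_j+\mu^* r_j)$, primal feasibility, and $\mu^*\bigl(\sum_{i,j}h_i r_j\boldsymbol X^{\mathrm{LP}}_{ij}-\lambda a\bigr)=0$; optimality of $\mu^*$ is $0\in\partial D(\mu^*)$. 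If $\text{Rel}(\text{PrimalResponse}(\boldsymbol v,m))\ge\lambda a$, the right derivative of $D$ at $0$ is already $\ge 0$, so $\mu^*=0$ and the integral revenue-maximizer is itself feasible: this is Case A. Otherwise the subgradient is negative at $\mu=0$, and by the monotonicity in (ii) it first becomes nonnegative at the smallest breakpoint whose right-limit integral solution $\boldsymbol X_+(\mu^*)$ is feasible, which is precisely the infimum formula of Case B, with $\mu^*>0$ by complementary slackness. Since $\mu^*$ is a genuine breakpoint, \Cref{ass:unique_rank_order} forces exactly one unordered pair $(j,k)\in\mathcal I$ to be tied in $\boldsymbol\delta^* = \boldsymbol v+\mu^*\boldsymbol r$; a swap among items both ranked below position $m$ leaves $D$ unchanged and cannot be a breakpoint, so this tied pair either straddles positions $m$ and $m+1$ or is an adjacent interior pair $(s,s+1)$ with $s<m$.

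To finish I would identify $\boldsymbol X^{\mathrm{LP}}$ on the optimal face. At $\mu=\mu^*$ the Lagrangian maximizers form the edge $\mathrm{conv}\{\boldsymbol X_-(\mu^*),\boldsymbol X_+(\mu^*)\}$, where $\boldsymbol X_\pm$ are the left- and right-limit PrimalResponse solutions; nondegeneracy rules out three-way ties, so these two vertices differ only by the single tied swap and the face is one-dimensional. When $\mu^*>0$, complementary slackness forces the relevance constraint to bind, so $\boldsymbol X^{\mathrm{LP}}=\alpha\,\boldsymbol X_-(\mu^*)+(1-\alpha)\,\boldsymbol X_+(\mu^*)$ with the unique $\alpha\in[0,1)$ solving $\alpha\,\text{Rel}(\boldsymbol X_-(\mu^*))+(1-\alpha)\,\text{Rel}(\boldsymbol X_+(\mu^*))=\lambda a$ (unique because $\text{Rel}$ is strictly monotone along the edge, the swapped items having $r_j\ne r_k$); this $\alpha$ is exactly the quantity computed in \Cref{al:find_opt_mu}. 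Substituting the two explicit matrices gives the stated forms: a tie straddling $m/(m+1)$ makes only row $m$ fractional (B1), with the degenerate subcase $\text{Rel}(\boldsymbol X_+(\mu^*))=\lambda a$ forcing $\alpha=0$ and an integral solution (B3); an interior adjacent tie at $(s,s+1)$ makes rows $s$ and $s+1$ carry the $2\times 2$ doubly-stochastic block (B2).

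I expect the main obstacle to be the two places where nondegeneracy does the real work: showing that the optimal face of the Lagrangian at a breakpoint is exactly the one-dimensional edge between the two limiting PrimalResponse outputs — which requires ruling out three-way ties and checking that vertices of $\mathcal F$ differing only in unselected coordinates do not enlarge that face — and then carrying out the index bookkeeping of cases B1--B3 (including the boundary value of $\alpha$) so the displayed matrices match. The total-unimodularity, rearrangement, and convex-duality (KKT) ingredients are routine, and the monotonicity of relevance in $\mu$ follows immediately from convexity of $D$.
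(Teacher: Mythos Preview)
Your proposal is correct and rests on the same core device as the paper---verify the KKT conditions (primal and dual feasibility, stationarity, complementary slackness) for the partial Lagrangian \hyperref[problem: OPT_LP_Dual]{$D_1$}---but the two arguments differ in style. The paper's proof is a pure \emph{verification}: it takes the closed-form $\boldsymbol X^{\mathrm{LP}}$ as given, computes $\text{Rel}(\boldsymbol X^{\mathrm{LP}})=\lambda a$ directly in each subcase, and then checks off the four KKT conditions in a short list; it does not separately justify the stated characterization of $\mu^*$ or explain why B1--B3 exhaust Case B. Your route is a \emph{derivation}: you first establish the piecewise-linear structure of $D(\mu)$ via total unimodularity of $\mathcal F$, locate its breakpoints at the $\mu_{jk}$, use convexity to get monotonicity of the relevance subgradient, and from these deduce both the infimum formula for $\mu^*$ and the one-dimensional optimal face $\mathrm{conv}\{\boldsymbol X_-(\mu^*),\boldsymbol X_+(\mu^*)\}$ on which complementary slackness singles out the claimed $\alpha$. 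What you gain is a self-contained argument for why the lemma's formula for $\mu^*$ is correct and why nondegeneracy forces exactly one adjacent swap (hence only the three subcases); what the paper's version gains is brevity, since once the candidate is on the table the KKT check is almost immediate.
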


\begin{proof}
    For \textbf{Case A}, it is trivial to see that when the relevance constraint of \hyperref[problem: OPT_LP]{$P_1$} is nonbinding, then the optimal dual variable $\mu^* = 0$. The LP optimal solution is integral and the permutation of the sorted vector $\boldsymbol{v}$ in a 
    descending order. 

    For \textbf{Case B}, we show that the derived LP solutions in all sub-cases are (1) feasible and (2) optimal. 

    \textbf{Feasibility Check.} 
    \begin{itemize}
        \item \textbf{B1 (Boundary tie between position $m$ and $m+1$).}
        \begin{align}
            \text{Rel}(\boldsymbol{X}^{\text{LP}})
            & =
            \sum_{i = 1}^{m - 1} h_{i} r_{j(i)} + \alpha h_{m} r_{j(m)} + (1 - \alpha) h_{m} r_{j(m+1)}
            \\
            &=
            \sum_{i = 1}^{m - 1} h_{i} r_{j(i)} + \lambda a - \sum_{i = 1}^{m - 1} h_{i} r_{j(i)}
            \\
            & = 
            \lambda a.
        \end{align}
        
        \item \textbf{B2 (Interior tie within the top $m$ positions).}
        \begin{align}
            \text{Rel}(\boldsymbol{X}^{\text{LP}})
            & =
            \sum_{i = 1, i \notin \{s, s+1\}}^{m} h_i r_{j(i)} + (1 - \alpha) (h_s r_{j(s)} +h_{s+1} r_{j(s+1)}) + \alpha (h_{s+1} r_{j(s)} +h_{s} r_{j(s+1)})
            \\
            &=
            \sum_{i = 1, i \notin \{s, s+1\}}^{m} h_i r_{j(i)} + \lambda a - \sum_{i = 1, i \notin \{s, s+1\}}^{m} h_i r_{j(i)}
            \\
            & = 
            \lambda a.
        \end{align}        

        \item \textbf{B3 (No tie within the top $m$ positions).} The proposed $\boldsymbol{X}^{\text{LP}}$ is the permutation matrix that assigns the top $m$ items to the $m$ rows in the order of $\boldsymbol{h}$. We will see below that in this subcase necessarily $\text{Rel}(\boldsymbol{X}^{\text{LP}}) = \lambda a$, otherwise $\mu^*$ would not be dual-optimal.
    \end{itemize}

    \textbf{Optimality Check.} 
    
    We show that $\boldsymbol{X}^{\text{LP}}$ in all three subcases of \textbf{Case B} are optimal for \hyperref[problem: OPT_LP]{$P_1$}. We first verify the KKT conditions for the relaxed LP problem: 
    \begin{enumerate}
        \item \textbf{Dual feasibility:} by definition of the dual problem.
        
        \item \textbf{Primal feasibility:} as we have shown in the previous part of \textbf{Feasibility check}.
        
        \item \textbf{Stationarity:} for all three subcases, we have $\boldsymbol{X}^{\text{LP}} \in \argmax_{\boldsymbol{X} \in \mathcal{F}} L(\mu, \boldsymbol{X})$. In case B1 and B2, every split within the tied block maximizes $L(\mu^*, \cdot)$, so our $\boldsymbol{X}^{\text{LP}}$ is one such maximizer; in case B3, since the order of $\boldsymbol{\delta}^*$ is strict over $j(1),\cdots,j(m)$, the stated permutation is the unique maximizer.

        \item \textbf{Complementary slackness:} for all three cases, we have $\mu^*(\lambda a - \text{Rel}(\boldsymbol{X}^{\text{LP}})) = 0$.
    \end{enumerate}
    Since all KKT conditions are satisfied, therefore, $\boldsymbol{X}^{\text{LP}}$ is optimal for \hyperref[problem: OPT_LP]{$P_1$}. 
\end{proof}

Then we have the following proof for \Cref{thm:optimality_lp_algo}:
\begin{proof}
    Note that we could re-write $\mathbb{E}[\Tilde{\boldsymbol{X}}] = (1 - \alpha) \boldsymbol{X}_{+} + \alpha \boldsymbol{X}_{-}$. Then we discuss over each case.

    \textbf{Case A (Nonbinding relevance constraint).} With $\mu^* = 0$, we have $\boldsymbol{X}_{-} = \boldsymbol{X}_{+} = \boldsymbol{X}^{\text{LP}}$. Thus $\mathbb{E}[\Tilde{\boldsymbol{X}}] = \boldsymbol{X}^{\text{LP}}$, where $\boldsymbol{X}^{\text{LP}}$ is the optimal solution to \hyperref[problem: OPT_LP]{$P_1$}.

    \textbf{Case B1 (Boundary tie between position $m$ and $m+1$).} From \Cref{al:find_opt_mu}, we see that both $\boldsymbol{X}_{-}$ and $\boldsymbol{X}_{+}$ have the exact same first rows $m-1$ with $\boldsymbol{X}_{+, i,j(i)} = \boldsymbol{X}_{-, i,j(i)} = 1$ for all $i < m$. Thus, $\mathbb{E}[\Tilde{\boldsymbol{X}}_{i,j(i)}] = \boldsymbol{X}^{\text{LP}}_{i,j(i)} = 1$ for all $i < m$. In the $m$-th row, $\boldsymbol{X}_{+, m,j(m+1)} = 1$ while $\boldsymbol{X}_{-, m,j(m)} = 1$. 

    Thus, we get the $m$-th row of $\mathbb{E}[\Tilde{\boldsymbol{X}}]$ as:
    \begin{equation}
        \mathbb{E}[\Tilde{\boldsymbol{X}}_{m,j(m)}] 
        = \alpha 
        = \boldsymbol{X}^{\text{LP}}_{m,j(m)},
    \end{equation}
    and 
    \begin{equation}
        \mathbb{E}[\Tilde{\boldsymbol{X}}_{mj(m+1)}] 
        = 1 - \alpha 
        = \boldsymbol{X}^{\text{LP}}_{m,j(m+1)}.
    \end{equation}
    Thus, $\mathbb{E}[\Tilde{\boldsymbol{X}}] = \boldsymbol{X}^{\text{LP}}$.

    \textbf{B2 (Interior tie within the top $m$ positions).} Similarly, we see that both $\boldsymbol{X}_{-}$ and $\boldsymbol{X}_{+}$ have the exact same rows except for $s$- and $s+1$-th row. Specifically, $\boldsymbol{X}_{-, s,j(s)} = 1$ and $\boldsymbol{X}_{+, s+1,j(s+1)} = 1$. 

    Therefore, we have $\mathbb{E}[\Tilde{\boldsymbol{X}}_{s,j(s)}] = \mathbb{E}[\Tilde{\boldsymbol{X}}_{s+1,j(s+1)}] = 1 - \alpha$ and $\mathbb{E}[\Tilde{\boldsymbol{X}}_{s,j(s+1)}] = \mathbb{E}[\Tilde{\boldsymbol{X}}_{s+1,j(s)}] = \alpha$. This gives: $\mathbb{E}[\Tilde{\boldsymbol{X}}] = \boldsymbol{X}^{\text{LP}}$.

    \textbf{B3 (No tie within the top $m$ positions).} Then the Lagrangian maximizer at $\mu^*$ is unique with $\boldsymbol{X}_{+} = \boldsymbol{X}_{-} = \boldsymbol{X}^{\text{LP}}$. Thus $\mathbb{E}[\Tilde{\boldsymbol{X}}] = \boldsymbol{X}^{\text{LP}}$.
\end{proof}

\subsection{Summary Statistics of the Sampled Data}\label{subsec:sampledata}

Panel A in table \ref{tab:summarystat_online} reports the summary statistics for our second dataset, introduced in Section \ref{subsec:data}, which consists of a random sample from the full-scale experiment. Panel B in table \ref{tab:summarystat_online} presents the summary statistics for our third dataset, a random sample drawn prior to the launch of the field experiment. Both samples are at the impression level and we report summary statistics only for those variables whose disclosure is permitted by the company. All monetary variables are reported in USD.

\vspace{.2in}
\begin{table}[ht]
\centering
\caption{Summary Statistics.}
\begin{tabular}{ l c c c c}
 \hline\hline
 Variable  & Mean & Median & SD & Sample Size\\
 \hline
 \multicolumn{5}{c}{Panel A: Random Sample of the Full-Scale Experiment} \\
 \hline
 Average Product Price & 2.69 & 0.31 & 57.33 &2,071,833 \\ 
 Number of Clicks & 1.07 & 1.00 &  0.53 & 2,071,833\\
 \hline
 \multicolumn{5}{c}{Panel B: Random Sample Drawn Before the Experiment} \\
 \hline
 Average Product Price & 301.31 & 29.69 & 11223.03 & 13,370,159\\
 Total Number of Candidate Items & 70.17 & 60.00 & 47.90 & 266,003 \\
 Total Number of Items & 11.34 & 12.00 & 2.35 & 266,003\\
 Total Number of Sponsored Items & 9.43 & 12.00 & 4.11 & 266,003\\
 Number of Clicks & 0.32 & 0.00 & 0.64 & 266,003\\
 Seed Item Price & 819.38 & 28.99 & 53,169.45 & 266,003\\
 \hline\hline
\end{tabular}
\label{tab:summarystat_online}
\end{table}

\subsection{Extension: Overall Planning}\label{sec:overall planning}
In practice, there may be additional requirements from marketplaces and/or customers. For instance, (1) some marketplaces would like to pose constraints such as inventory limits or targeted revenue for each seller or product category, and (2) some consumers way prefer the diversity of items shown to them. These global constraints involve many impressions over a period; thus, we call it overall planning. Unfortunately, there is no guarantee that a score-based algorithm can satisfy such constraints because a score-based algorithm takes greedy operations for local objectives. On the other hand, these constraints can be naturally incorporated into an LP-based algorithm, as described below.

\subsubsection{LP-Based Algorithm for Overall Planning}

The basic idea is to utilize historical data to obtain corresponding dual variables (i.e., reduced costs) associated with the overall planning constraints and then utilize these dual variables in the online LP-based ranking.

\textbf{Step 1: Obtain dual variables for global constraints.}
Given a historical dataset that records the impressions for each day, suppose there are $T$ impressions, $G$ unique consumers, and a group of $K$ selected sellers who pay additional \textit{Ad fees} and accordingly pose requirements on inventory and revenue. For each impression $t$, it has $m_t$ slots and $n_t$ candidate items. Let $h_{it}, v_{jt}, r_{jt}$ denote the associated position weight for each slot $it$, revenue and relevance for each candidate item $jt$. Similar to the previous setting, for each impression $t$, we can get the maximum relevance score $a_t$ that any ranking can obtain, and define any relevance hyperparameter $\lambda_t \in [0,1]$.

For a selected seller $k$, there is an inventory limit $L_k$, and a target revenue $I_k$ within a time period (usually a day). We consider the following overall planning constraints: (i) from the perspective of market fairness, the online marketplace does not interfere with the number of items a single seller $k$ could display to a single consumer $g$; (ii) for each consumer $g$, she would prefer candidate items from various diverse sellers instead of a few who have paid high \textit{Ad fees}. Therefore, we would set up a lower bound of the total of sellers whose items are displayed to her, denoted by $S_g$. (iii) After paying an extra Ad rate, the seller $k$ would want to display his items to at least $C_k$ number of consumers. We would highlight that any combination of these constraints (even some similar constraints) can be incorporated into this algorithmic framework.

Suppose we have $T$ impressions in the historical data, indexed by $t$,i.e., $t = 1,...,T$. We select $K$ unique sellers, indexed by $k = 1,...,K$, and $G$ unique consumers, indexed by $g = 1,...,G$. Each consumer $g$ has total $T_g$ impressions. Moreover, among these $T_g$ impressions, the candidate items are provided by $K_g$ sellers. 

Using historical data, we can obtain the optimal ranking with overall constraints (i), (ii), (iii) by solving the following LP:
\begingroup
\allowdisplaybreaks
\begin{align}
    \max_{\boldsymbol{X}} & \sum_{t=1}^T \sum_{i=1}^{m_t} \sum_{j=1}^{n_t} h_{it} v_{jt} \boldsymbol{X}_{ijt}\\
    \text{s.t. } 
    & 
    \sum_{i=1}^{m_t} \sum_{j=1}^{n_t} h_{it} r_{jt} \boldsymbol{X}_{ijt} \geq \lambda_{t} a_{t}, \forall t=1,...,T \label{eq:lambda const} 
    \\
    & 
    \sum_{t=1}^{T} \sum_{i=1}^{m_t} \sum_{j=1}^{n_t} \boldsymbol{X}_{ijt} \mathbbm 1\{\text{item } jt \text{ listed by seller } k\} \leq L_k, \forall k=1,...,K \label{eq:inventory}
    \\
    & 
    \sum_{t=1}^{T} \sum_{i=1}^{m_t} \sum_{j=1}^{n_t} h_{it} \boldsymbol{X}_{ijt} v_{jtk} \mathbbm 1\{\text{item } jt \text{ listed by seller } k\}  \geq I_k, \forall k=1,...,K\label{eq:target}
    \\
    & \sum_{g = 1}^{G} \sum_{t=1}^{T_g} \sum_{i=1}^{m_t} \sum_{j=1}^{n_t} \boldsymbol{X}_{ijt}  \mathbbm 1\{\text{item } jt \text{ listed by seller } k\} \geq C_k, \forall k = 1,\cdots,K
    \label{eq:minimum seller}
    \\
    & \sum_{k = 1}^{K} \sum_{t=1}^{T_g} \sum_{i=1}^{m_t} \sum_{j=1}^{n_t} \boldsymbol{X}_{ijt}  \mathbbm 1\{\text{item } jt \text{ listed by seller } k\} \geq S_g, \forall g = 1,\cdots,G
    \label{eq:minimum consumer}
    \\
    &\sum_{i=1}^{m_t} \boldsymbol{X}_{ijt} \leq 1, \forall t=1,...,T, \forall jt = 1,...,n_{t} 
    \\
    &\sum_{j=1}^{n_t} \boldsymbol{X}_{ijt} \leq 1, \forall t=1,...,T, \forall it = 1,...,m_{t}  
    \\
    &0 \leq \boldsymbol{X}_{ijt} \leq 1 , \forall t=1,...,T, \forall jt = 1,...,n_{t}, \forall it = 1,...,m_{t}  
\end{align}
\endgroup
where \eqref{eq:lambda const} associates with the relevance constraint, \eqref{eq:inventory} associates with the inventory limit constraint, \eqref{eq:target} associates with the revenue constraint, \eqref{eq:minimum seller} associates with the minimum seller constraint, and \eqref{eq:minimum consumer} associates with the minimum consumer constraint.

Notice that the above LP could be large in terms of the size of the decision variable and the number of constraints, therefore we cannot solve it in real time. Instead, we can collect the data from the last time period and solve the above offline LP. Then, we obtain the optimal dual variables $\xi$ (inventory limit constraint~\eqref{eq:inventory}), $\nu$ (the revenue constraint~\eqref{eq:target}), $\eta$ (the minimum seller constraint~\eqref{eq:minimum seller}), and $\theta$ (minimum consumer constraint~\eqref{eq:minimum consumer}). 

Later, for each incoming impression $t$ induced by consumer $g$, we can modify the previous LP problem proposed in \hyperref[problem: OPT_LP]{$P_1$} by adding extra terms associated with $\xi, \nu, \theta, \eta$ to the objective function. The modified LP problem is illustrated in below:
\begin{align*}
    \max_{\boldsymbol{X}} 
    & 
    \sum_{i=1}^{m_t} \sum_{j=1}^{n_t} h_{it} v_{jt} \boldsymbol{X}_{ijt} 
    - 
    \sum_{k = 1}^{K} \xi_k \sum_{i=1}^{m_t} \sum_{j=1}^{n_t} \boldsymbol{X}_{ijt}  \mathbbm 1\{\text{item } jt \text{ listed by seller } k\} 
    \nonumber
    \\
    &+ 
    \sum_{k=1}^{K} \nu_k \sum_{i=1}^{m_t} \sum_{j=1}^{n_t} h_{it} \boldsymbol{X}_{ijt} v_{jtk} \mathbbm 1\{\text{item } jt \text{ listed by seller } k\} 
    \nonumber
    \\
    &+
    \sum_{k=1}^{K} \eta_k \sum_{i=1}^{m_t} \sum_{j=1}^{n_t} \boldsymbol{X}_{ijt} \mathbbm 1\{\text{item } jt \text{ listed by seller } k\}
    \\
    &+
    \theta_g \sum_{k = 1}^{K} \sum_{i=1}^{m_t} \sum_{j=1}^{n_t} \boldsymbol{X}_{ijt}  \mathbbm 1\{\text{item } jt \text{ listed by seller } k\}
    \\
    \text{s.t. } 
    & \sum_{i=1}^{m_t} \sum_{j=1}^{n_t} h_{it} r_{jt} \boldsymbol{X}_{ijt} \geq \lambda_{t} a_{t}
    \\
    &\sum_{i=1}^{m_t} \boldsymbol{X}_{ijt} \leq 1, \forall j = 1,...,n_{t}  \\
    &\sum_{j=1}^{n_t} \boldsymbol{X}_{ijt} \leq 1, \forall i = 1,...,m_{t}  \\
    &0 \leq \boldsymbol{X}_{ijt} \leq 1, \forall j = 1,...,n_{t}, \forall i = 1,...,m_{t}.
\end{align*}

Notice that the LP formulation can be huge in practice. We present numerical experiments on solving such problems with the recent development of solvers in the next section, which demonstrates the traceability of this approach.

\subsubsection{Computational results for overall planning}

In this section, we present the experiment results for the overall planning problem from Section \ref{sec:overall planning}. The overall planning is formalized as a large-scale problem using the data from a certain period; therefore, it can only be solved offline. Therefore, in this experiment, we formulate the optimization problem using the historical data of an entire day. Specifically speaking, we would set up the restrictions on the existing inventory $L$ and the targeted revenue $I$ for the selected top sellers who have the most listed items. In total, we filter out 5,408 sellers and formulate an optimization problem with 265,646,022 variables, 26,423,572 constraints, and 962,620,739 non-zeros. We use two solvers, which are Gurobi (version 9.0) and OR-Tools \footnote{\url{https://developers.google.com/optimization/lp/lp_advanced}}, and compare their performances in terms of the solving time, number of iterations, and minimum memory spaces required.

Currently, Gurobi adopts three widely-used methods to solve the LP problem, which are respectively primal simplex, dual simplex, and barrier. For OR-Tools, we choose the method of Practical Large-Scale Linear Programming using Primal-Dual Hybrid Gradient (PDLP) \citep{applegate2021practical,applegate2023faster,applegate2024infeasibility}, which specializes in solving large-scale LP problems efficiently. Thus, in this experiment, we incorporate four groups (Primal Simplex, Dual Simplex, Barrier, and PDLP) in total, and compare the performance of PDLP to all other groups. For these experiment groups, under different cores and threads, we evaluate their performance in terms of the solving time. Similar to the previous section, we implement this experiment with the Dell PowerEdge R640 computing node. It is noted that in our case, due to the restricted computing resources, the upper limit of the running time is set to be 48 hours ($172,800$ seconds). 

\vspace{0.2cm}
\begin{table}[ht]
\centering
\begin{threeparttable}
\caption{Experiment Results for Overall Planning} 
\begin{tabular}{ l p{2.1cm} p{2.1cm} p{2.1cm} p{2.1cm}}
    \hline \hline
    \textbf{Solving Method}& PDLP & Barrier & Primal Simplex & Dual Simplex\\
    \hline
   Number of Threads & 4 & 4 & 1* & 1*\\
    Able to Solve the Problem Correctly? & Yes & Yes & Yes & No\\
   Average Solving Time (Sec) & 27,458 & 56,236 & 171,396 & **\\
   Number of Iterations & 3,328 & 208 & 206 & **\\
   Ratio of Cases Reaching the Time Limit & 0 & 0 & 80\%*** & **\\
    \hline \hline
    \end{tabular}
\begin{tablenotes}
\footnotesize
\item Notes. In total, the problem contains 265,646,022 variables, 26,423,572 constraints, and 962,620,739 non-zeros. We run each experiment with the same settings 5 times and take the average of each metric to report. In cases where the solving time limit is reached before the solution is given, we use the time limit(172,800 seconds) as its solving time in terms of calculating the average solving time. * \quad For Primal and Dual simplex methods, Gurobi only allows 1 thread to solve the problem, which cannot be altered. ** \quad Dual simplex fails to give the correct solution, of which the metrics will not be reported. *** \quad Only 1 out of 5 instances finished solving within the time limit. 
\end{tablenotes}
\label{tab:overall planning}
\end{threeparttable}
\end{table}
\vspace{0.2cm}  

Out of four groups, we first find that Gurobi Dual simplex method fails to deliver the correct solution due to the numerical sensitivity caused by the excessively large number of variable coefficients, and Gurobi Primal simplex reach the time limit for four out of the five runs. Both PDLP and Gurobi Barrier can solve the problem, while PDLP is averagingly twice as fast as Gurobi Barrier. This demonstrates that solving the huge-scale overall planning problem with the LP-based method is tractable with modern computing architectures and optimization solvers. 



\subsection{Representation of the Random Sample from the Full-Scale Experiment }\label{subsec:representation}
We report a corresponding table to Table~\ref{tab:Stat financial performance} in the main draft, constructed using the random sample from the full-scale experiment. As observed, the primary difference lies in the fact that the \textbf{LP90} group does not demonstrate a statistically significant improvement in number of purchases relative to the score-based benchmark.

\vspace{0.2cm}
\begin{table}[ht]
\centering
\begin{threeparttable}
\caption{Representation of the Random Sample} 
\begin{tabular}{ccc}
    \hline \hline
    & LP90 & LP95 \\
    & (Treatment 1) & (Treatment 2) \\
    \hline
    \textit{Purchases} & 0.23\% \quad (0.51) & 7.77\%*** \quad (<0.001)  \\
    \textit{GMV} & 2.16\% \quad (0.11)  & 6.09\%*** \quad (<0.001) \\
    \textit{Revenue} & 3.07\%*** \quad (<0.01) & 6.63\%** \quad (0.01)  \\
    
    
    \hline \hline
    \end{tabular}
\begin{tablenotes}
\footnotesize
\item Notes. We report the relative changes of the two treatment groups (\textbf{LP90} and \textbf{LP95}) compared with control group (\textbf{Benchmark}), i.e., the score-based algorithm for the random sample of the full-scale experiment. Associated $p$-values from pairwise $t$-tests are reported in parentheses. The sample size in Treatment 1 is $N_{\text{treatment1}} = 126,713$, in Treatment 2 is $N_{\text{treatment2}} = 136,759$, and in the Control group is $N_{\text{control}} = 126,875$. *** $p<$ 0.01, ** $p<$ 0.05, * $p<$ 0.1.
\end{tablenotes}
\label{tab:financerandomsample}
\end{threeparttable}
\end{table}
\vspace{0.2cm}

\subsection{Supplementary Evidence on Items Favored by the LP-Based Algorithm}\label{subsec:itemsupplementary}
In this section, we provide supplementary evidence on items favored by the LP-based algorithm using the sampled impression-level data from the field experiment. This dataset contains only aggregated features of sponsored listings. We focus on two key features. The first is $\textit{Log Average Price}$, which captures the logarithm of the average product price across all items in a sponsored listing. The logarithmic transformation mitigates the influence of extreme values. The second feature is $\textit{Weighted PTR}$, a measure of the predicted likelihood of purchase. Formally, it is defined as the inner product of the position weight vector $\boldsymbol{h}$ and the relevance score vector $\boldsymbol{r}$.

We implement the following empirical strategy setup:
\begin{equation}
    Y_{i} = \beta_{0} + \beta_{1} \times \text{Algo90}_i +  \beta_{2} \times \text{Algo95}_i + \eta_t + \epsilon_{i},
\end{equation}
where for a given impression indexed by $i$, $Y_{i}$ is the dependent variable, and $Y_{i} \in \{$\textit{Log Average Price}$, $\textit{Weighted PTR}$\}$. $\text{Algo90}_i, \text{Algo95}_i$ are indicators representing which algorithms the current impression has been assigned to (i.e., \textbf{LP90} or \textbf{LP95}; if both are zero, then it is assigned to \textbf{Benchmark}). $\beta_1$ and $\beta_2$ are the parameters of interest, capturing the effects of \textbf{LP90} and \textbf{LP95} on the dependent variable of interest relative to the score-based algorithm. We include day fixed effects $\eta_t$ to account for any time-related heterogeneity. $\epsilon_{i}$ is the error term. 

\vspace{0.2cm}
\begin{table}[ht]
\centering
\begin{threeparttable}
\caption{Sponsored Listing Characteristics Favored by the LP-based Algorithm.} 
\begin{tabular}{ l p{3.1cm} p{3.1cm}}
    \hline \hline
    & Log Average Price & Weighted PTR \\
    \hline
   LP90 & -0.064*** & 0.117\\
    &(0.002) & (0.091) \\
   LP95 & -0.109*** & 0.483***\\
   & (0.002) & (0.074) \\
    \hline \hline
    \end{tabular}
\begin{tablenotes}
\footnotesize
\item Notes. Observation numbers are $N = 2,071,833$. Robust standard errors are reported in parentheses. *** $p<$ 0.01, ** $p<$ 0.05, * $p<$ 0.1.
\end{tablenotes}
\label{tab:Regression results}
\end{threeparttable}
\end{table}
\vspace{0.2cm}  

Table~\ref{tab:Regression results} reports how the LP-based algorithms (\textbf{LP90} and \textbf{LP95}) alter the characteristics of sponsored listings relative to the score-based benchmark. Both \textbf{LP90} and \textbf{LP95} significantly reduce the $\textit{Log Average Price}$ of displayed items, with \textbf{LP90} lowering it by 0.064 and \textbf{LP95} by an even larger 0.109. This contrasts with the results in Section~\ref{subsubsec:lpfavored}. Specifically, the average product price in our random sample from the full-scale experiment is 2.69 USD, compared to 301.31 USD in the pre-experiment sample, indicating that the two samples differ substantially in product categories. This suggests that the LP-based algorithms prioritize items with higher expected revenue, consistent with our main findings in Section~\ref{subsubsec:lpfavored}, rather than systematically favoring either more expensive or cheaper products. For lower-priced categories, demand effects dominate, leading to the observed price decreases in Table~\ref{tab:Regression results}. Conversely, for higher-priced categories, price effects dominate, resulting in price increases among displayed items in Section~\ref{subsubsec:lpfavored}. Importantly, the observed improvements of the LP-based algorithms cannot be attributed to changes in transaction prices, as shown in Section~\ref{subsec:main_result}. The relative changes in average transaction price are small and statistically insignificant for both treatments (\textbf{LP95}: –0.23\%; \textbf{LP90}: –0.20\%). Regarding $\textit{Weighted PTR}$, \textbf{LP90} shows no statistically significant change, whereas \textbf{LP95} yields a sizable and highly significant increase of 0.483, consistent with the findings in Section~\ref{subsubsec:lpfavored}.

\subsection{The Revenue-Relevance Trade-Off}\label{subsec:revenuerelevance}
In this section, we provide some empirical evidence for the revenue-relevance trade-off. Through our LP model, we can identify the trade-off between revenue and relevance. When the hyperparameter $\lambda$ in problem \hyperref[problem: OPT_LP]{$P_1$} increases, the relevance increases while the optimal revenue decreases. 

From September 13th, 2022 to September, 27th, 2022, we conducted a second A/B test on the same marketplace (dWeb), in which we mainly focused on exploring the comparison of performances for various values of $\lambda$. In total, we have set up four experiment groups, which are named respectively \textbf{LP90}, \textbf{LP85}, \textbf{LP80}, and \textbf{LP75}, taking the associated $\lambda$ to be 0.90, 0.85, 0.80, and 0.75. In total, we count 17,931,775 unique consumers and 199,681,466 impressions. Specifically, \textbf{LP90} was assigned 4,483,091 consumers, generating 49,916,423 impressions; \textbf{LP85} was assigned 4,487,531 consumers, generating 49,983,811 impressions; \textbf{LP80} was assigned 4,479,883 consumers, generating 49,883,948 impressions; \textbf{LP75} was assigned 4,481,270 consumers, generating 49,897,284 impressions. We concentrate on two metrics: \textit{Revenue per consumer} and  \textit{Purchases per consumer} since these are the most essential evaluation metrics for revenue and relevance, as we explained before. 

We set \textbf{LP75} to be the benchmark, of which we compute the relative change of metrics for other groups over.  Figure \ref{fig: Revenue and relevance} shows the apparent trade-off over these two metrics. For \textit{Revenue per consumer}, compared to \textbf{LP75}, the relative changes of \textbf{LP90}, \textbf{LP85}, \textbf{LP80} are 1.24\%, 0.60\%, and 0.30\%. For the \textit{Purchases per consumer}, compared to \textbf{LP75}, the relative changes of \textbf{LP90}, \textbf{LP85}, \textbf{LP80} are -1.19\%, -0.89\%, and 0.11\%. As $\lambda$, the hyperparameter controlling the overall relevance, decreases, \textit{Purchases per consumer} drops as well while \textit{Revenue per consumer} increases. 

\begin{figure}
\centering
  \begin{subfigure}[b]{0.49\textwidth}
    \includegraphics[width=\textwidth]{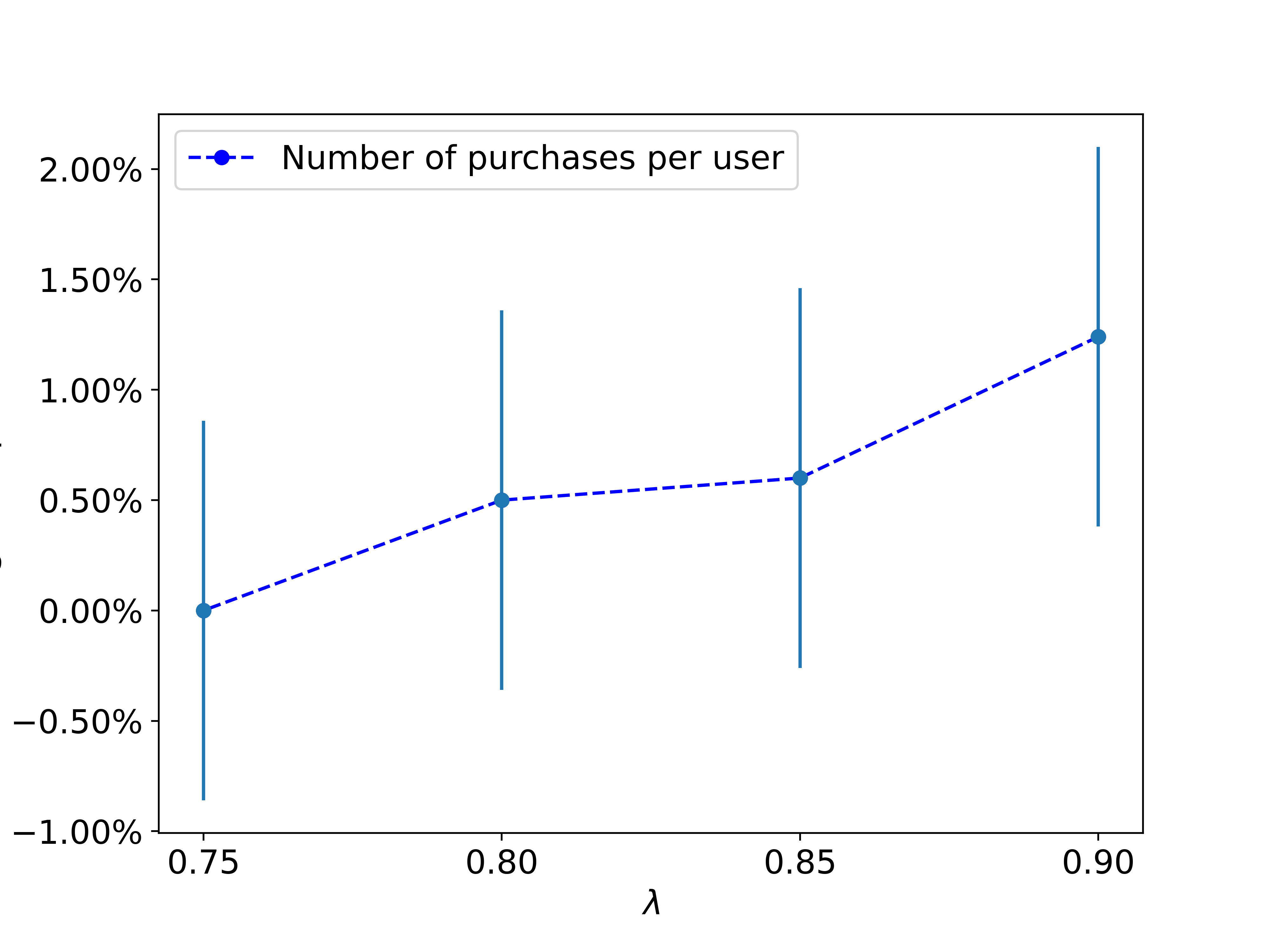}
    \caption{Relative change of number of purchases}
  \end{subfigure}
  \hfill
  \begin{subfigure}[b]{0.49\textwidth}
    \includegraphics[width=\textwidth]{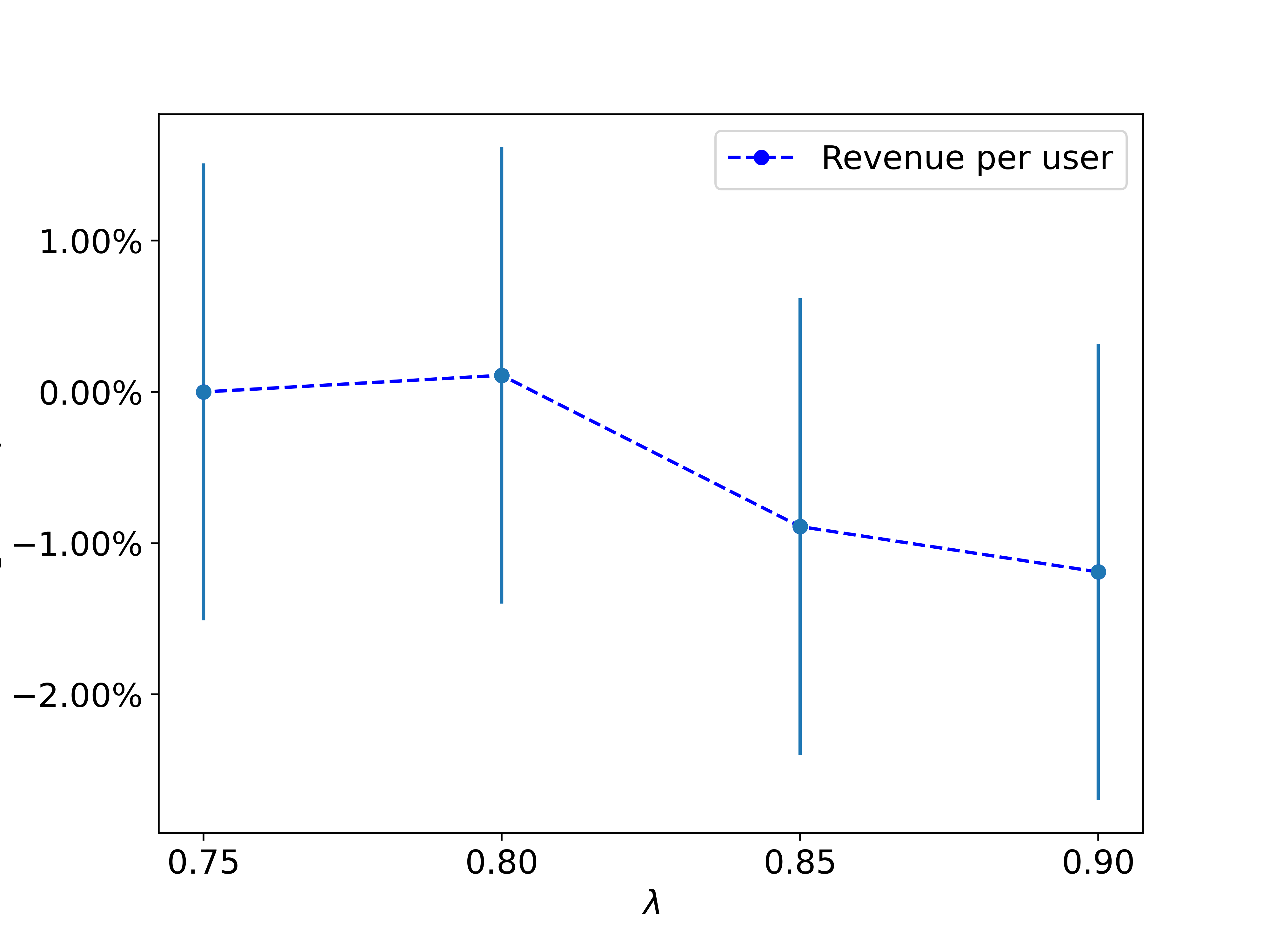}
    \caption{Relative change of revenue}
  \end{subfigure}
  \captionsetup{labelfont=bf}
  \caption{Plots with the error bar showing the comparison of the relative change of (a) the Purchases per consumer and (b) Revenue per consumer for \textbf{LP90}, \textbf{LP85}, \textbf{LP80}, and \textbf{LP75}. \label{fig: Revenue and relevance}}
\end{figure}

\subsection{Empirical Evidence on Revenue Independence between SLR and KR}\label{subsec:SLRKR}
In this section, we examine whether the observed increases in SLR revenues may partly cannibalize the purchases and revenues of the \textit{seed item}. To investigate this possibility, we draw on the sampled data collected prior to the launch of the field experiment, which includes revenue information for the \textit{seed item}. For each impression, we define KR revenue as the revenue that Marketplace A generates from sales of the \textit{seed item}. Using three empirical tests, we find no evidence that increases in SLR revenue may reduce KR revenue.

\textbf{Test 1: Pearson correlation.} For each impression $i$, we denote KR revenue as $\textit{RevenueKR}_{i}$ and SLR revenue as $\textit{RevenueSLR}_{i}$. Then we compute the Pearson correlation coefficient between $\textit{RevenueKR}_{i}$ and $\textit{RevenueSLR}_{i}$ across all impressions $i = 1,\cdots,266,003$. The resulting correlation coefficient is $0.001$ with p-value of $0.901$, indicating no significant correlation between $\textit{RevenueKR}_{i}$ and $\textit{RevenueSLR}_{i}$.

\textbf{Test 2: Granger causality test.} We conduct a Granger causality test (\citealt{granger1969investigating}) to examine whether the sequence of $\textit{RevenueSLR}{i}$ helps forecast $\textit{RevenueKR}{i}$ at various lags, which would indicate a potential dependency. Since our sampled impressions include timestamps, we are able to order them chronologically and construct the required lag structures. As reported in \cref{tab:Granger Causality Test}, the results reveal no evidence of a significant causal relationship. Across all tested lag lengths (1, 2, 10, and 20), the p-values remain non-significant, suggesting that SLR revenue does not Granger-cause KR revenue.

\vspace{0.2cm}
\begin{table}[ht]
\centering
\begin{threeparttable}
\caption{Granger Causality Test Results} 
\begin{tabular}{ l p{2.8cm} p{2.8cm} p{2.8cm} p{2.8cm}}
    \hline \hline
    & Lag=1 & Lag=2 & Lag=10 & Lag=20\\
    \hline
   SSR F-test & 0.0230 (0.8795) & 0.0221 (0.9781) & 0.0131 (1.0000) & 0.9239 (0.5560)\\
    SSR Chi2-test & 0.0230 (0.8795) & 0.0442 (0.9781) & 0.1311 (1.0000) & 18.5267 (0.5528)\\
   Likelihood Ratio & 0.0230 (0.8795) & 0.0442 (0.9781) & 0.1311 (1.0000) & 18.5157 (0.5535) \\
   Parameter F-test & 0.0230 (0.8795) & 0.0221 (0.9781) & 0.0131 (1.0000) & 0.9239 (0.5560) \\
    \hline \hline
    \end{tabular}
\begin{tablenotes}
\footnotesize
\item Notes. The table presents Granger causality test statistics with corresponding p-values in parentheses for different lags ($1,2,10,20$).
\end{tablenotes}
\label{tab:Granger Causality Test}
\end{threeparttable}
\end{table}
\vspace{0.2cm}  

\textbf{Test 3: Pooled OLS.} We further conduct an analysis restricted to sessions with identifiable user IDs. Leveraging our panel of 157,154 impressions from 153,275 unique users, we examine how KR revenue relates to the characteristics of SLR and SLR revenue. To estimate this relationship, we employ a pooled Ordinary Least Squares (OLS) regression, clustering standard errors at the user level.

We set up the following specification:
\begin{align}
    \textit{RevenueKR}_{i} 
    &= 
    \alpha + 
    \beta \textit{RevenueSLR}_{i} +  
    \gamma_{1} \textit{LogSeedPrice}_{i}
    \nonumber
    \\
    & \quad  +
    \gamma_{2} \textit{LogSponsoredListingPrice}_{i}  +
    \gamma_{3} \textit{AdRate}_{i} +
    \delta_{d} + \tau_{h} + \epsilon_{i}.
\end{align}
$\textit{LogSeedPrice}_{i}$ denotes the logarithm of the price of the seed item in impression $i$, $\textit{LogSponsoredListingPrice}_{i}$ refers to the average logarithm of the prices of the SLR items in impression $i$, and $\textit{AdRate}{i}$ captures the average ad rate of the SLR items in impression $i$. In addition, we control for day fixed effects ($\delta_d$) and hour fixed effects ($\tau_{h}$). Standard errors are clustered at the user level. Table~\ref{tab:KRSLR} shows that the relationship is not statistically significant, consistent with the results of the previous two empirical tests. Overall, the evidence suggests that in our setting, improvements in SLR revenue do not materially cannibalize the revenue of the \textit{seed item}.

\vspace{0.2cm}
\begin{table}[ht]
\centering
\begin{threeparttable}
\caption{KR Revenue and SLR Revenue.} 
\begin{tabular}{ l p{3.1cm}}
    \hline \hline
    & KR Revenue \\
    \hline
   SLR Revenue & -0.0003 \\
   & (0.000) \\
   LogSeedPrice & 0.1186 \\
   & (1.650) \\
   LogSponsoredListingPrice & -1.8418 \\
   & (2.196) \\
   Ad Rate & 2.0584** \\
   & (0.889) \\
   Observations & 157,154\\
    \hline \hline
    \end{tabular}
\begin{tablenotes}
\footnotesize
\item Notes. Standard errors clustered at the user level are reported in parentheses. *** $p<$ 0.01, ** $p<$ 0.05, * $p<$ 0.1.
\end{tablenotes}
\label{tab:KRSLR}
\end{threeparttable}
\end{table}
\vspace{0.2cm}

\subsection{Stability of Experimental Results: Implication for Long-Term Effects}\label{subsec:longterm}
A key concern in interpreting short-term experiment results is whether the observed treatment effects persist beyond the experimental window or merely reflect transient responses. Since our goal is to infer long-term impacts from a 17-day online A/B test, it is crucial to evaluate whether the treatment effects of the LP-based ranking algorithms (\textbf{LP90} and \textbf{LP95}) remain stable across the duration of exposure. To this end, we conduct empirical tests of the temporal stability of treatment effects.

We begin by dividing the chronologically ordered impression stream into $B \in {10, 20, 40}$ equal-sized exposure bins, indexed from early to late impressions. Within each bin, we compute average outcomes for each algorithm, including the number of purchases and revenue generated for the marketplace. Formally, let $N$ denote the total number of chronologically ordered impressions in the dataset. For a chosen number of bins $B$, we partition the stream as follows:
\begin{equation} \mathcal{B}_{b} = \{t : \lfloor \frac{t N}{B}\rfloor = b\}, \quad b = 0,\cdots,B-1. \end{equation}
so that each exposure bin contains approximately $N/B$ sequential impressions.

Our impression-level dataset consists of randomized assignments to one of three ranking algorithms: the benchmark score-based algorithm and the two LP-based alternatives (\textbf{LP90} and \textbf{LP95}). From these, we compute bin-level average treatment effects (ATEs) by comparing outcomes in each LP-based group against the benchmark. Specifically, let $\bar{Y}_{g,b}$ denote the mean of an outcome variable (either Revenue per Impression or Number of Purchases) for algorithm $g$ in bin $b$. Then the bin-level ATEs are defined as:
\begin{equation} \text{ATE}_{90,b} = \bar{Y}_{\text{LP}90, b} - \bar{Y}_{\text{Benchmark}, b}, \quad \text{ATE}_{95,b} = \bar{Y}_{\text{LP}95, b} - \bar{Y}_{\text{Benchmark}, b}. \end{equation}

Finally, to test for temporal trends in treatment effects, we regress the bin-level ATEs on bin indices:
\begin{equation} \text{ATE}_{g,b} = \beta_{0,g} + \beta_{1,g} b + \epsilon_{g,b}, \quad b = 0,\cdots,B-1. \end{equation}

This specification allows us to assess whether treatment effects are stable over time ($\beta_{1,g} = 0$) or exhibit systematic upward or downward trends.

\vspace{0.2cm}
\begin{table}[ht]
\centering
\begin{threeparttable}
\caption{Stability of Experimental Results.} 
\begin{tabular}{ lcccc}
    \hline \hline
    & \multicolumn{2}{c}{LP90} & \multicolumn{2}{c}{LP95} \\
   Bin Number & Purchases & Revenue & Purchases & Revenue \\
   \hline
  B = 10 & 0.012 & -0.0002 & 0.008 & 0.000 \\
  & (0.007) & (0.001) & (0.011) & (0.000) \\
  B = 20 & 0.009* & 0.000 & 0.006 & 0.000\\
  & (0.006) & (0.000) & (0.006) & ( 0.000) \\
  B = 40 & 0.008 & 0.0004 & -0.008 & -0.0002 \\
  & (0.006) & (0.001) & (0.008) & (0.000) \\
    \hline \hline
    \end{tabular}
\begin{tablenotes}
\footnotesize
\item Notes. Robust standard errors are reported in parentheses. *** $p<$ 0.01, ** $p<$ 0.05, * $p<$ 0.1.
\end{tablenotes}
\label{tab:longterm}
\end{threeparttable}
\end{table}
\vspace{0.2cm}  

Table~\ref{tab:longterm} shows that, across all bin numbers and outcome variables, there is no statistically significant evidence of any systematic trend in the experimental data, suggesting that our experimental results are likely to hold in the long run.

\end{document}